\def\bkR{\mathds{R}}
\def\bkZ{\mathds{Z}}
\def\mcal{\mathcal}
\newtheorem{theorem}{Theorem}[section]
\newtheorem{corollary}[theorem]{Corollary}
\newtheorem{lemma}[theorem]{Lemma}
\newtheorem{notation}[theorem]{Notation}
\newtheorem{construction}[theorem]{Construction}
\newtheorem{assumptions}[theorem]{Assumptions}
\newtheorem{assumption}[theorem]{Assumption}
\newtheorem{definition}[theorem]{Definition}
\newtheorem{proposition}[theorem]{Proposition}
\newtheorem{remark}[theorem]{Remark}
\newtheorem{example}[theorem]{Example}
\newtheorem{examples}[theorem]{Examples}
\newtheorem{acknowledgements}[theorem]{Acknowledgements}
\def\bkS{\mathds{S}}
\def\bkT{\mathds{T}}
\def\bkF{\mathds{F}}
\def\bkZ{\mathds{Z}}
\def\bkM{\mathds{M}}
\newcommand{\Set}{\mathsf{Set}}
\newcommand{\A}{\mcal{A}}
\newcommand{\D}{\mcal{D}}
\newcommand{\C}{\mcal{C}}
\renewcommand{\P}{\mcal{P}}
\newcommand{\Alg}{\mathbf{Alg}}
\newcommand{\Gra}{\mathbf{Gra}}
\newcommand{\Ord}{\mathbf{Ord}}
\def\id{{\mathit{id}}}
\def\card{{\mathit{card}}}
\def\Id{{\mathit{Id}}}
\def\inl{{\mathit{inl}}}
\def\colim{\mathop{\mathrm{colim}}\limits}
\newcommand{\obj}{\mathsf{obj}}
 \author{Ji\v{r}\'{\i} Ad\'{a}mek}
 \address{Technical University Braunschweig}
 \email{adamek@iti.cs.tu-bs.de}
\title{Colimits of Monads}
 \dedicatory{Dedicated to the seventieth birthday of Manuela Sobral.}
 \def\labelauthor{JA} 
\begin{document}

\begin{abstract}
	The category of all monads over many-sorted sets (and over other "set-like" categories) is proved to have coequalizers and strong cointersections. And a general diagram has a colimit whenever all the monads involved preserve monomorphisms and have arbitrarily large joint pre-fixpoints. In contrast, coequalizers fail to exist e.g. for monads over the (presheaf) category of graphs.
	
	For more general categories we extend the results on coproducts of monads from \cite{\labelauthor:bib:ablm}. We call a monad separated if, when restricted to monomorphisms, its unit has a complement. We prove that every collection of separated monads with arbitrarily large joint pre-fixpoints has a coproduct. And a concrete formula for these coproducts is presented.

\end{abstract}

\maketitle


\section{Introduction}\label{Introduction}
\setcounter{equation}{0}
	Whereas limits in the category Monad($\A$) of monads over a complete category $\A$ are easy, since the forgetful functor into the category $[\A,\A]$ of all endofunctors creates limits, colimits are more interesting. For example, a coproduct of two monads need not exist in Monad ($\A$) -- in fact, there are only four (trivial) types of monads over $\Set$ having a coproduct with every monad, as proved in \cite{\labelauthor:bib:ablm}, see Theorem \ref{\labelauthor:thm:T-44} below. In that paper a formula for coproducts of monads over $\Set$ was presented, and we extend it to coproducts of separated monads over general categories $\A$. Separatedness means that a complement of the unit of the monad exists if we restrict ourselves to the category $\A_m$ of objects and monomorphisms of $\A$. All consistent monads over $\Set$ are separated, see \cite{\labelauthor:bib:ablm}. For other base categories many interesting monads fail to be separated.
	
	Our main result is that in "set-like" categories, e.\,g., many-sorted sets, vector spaces or sets and partial functions, the category Monad ($\A$) has (a) all coequalizers and strong cointersections and (b) colimits of every diagram of monos-preserving monads with arbitrarily large joint pre-fixpoints. (An object $X$ is a pre-fixpoint of a monad $\bkS$ if $SX$ is a subobject of $X$.) That last condition is proved to be weaker than assuming that the monads are accessible. Moreover, arbitrarily large joint pre-fixpoints are sufficient for coproducts of 
	\begin{enumerate}[(1)]
		\item monos-preserving monads over set-like categories
		\item separated monads over rather general categories.
	\end{enumerate}
	And if $\A = \Set$, this condition is in case of coproducts of consistent monads also necessary (unless all but one of the monads are of the trivial type, see Theorem \ref{\labelauthor:thm:T-44} below). It is an open problem whether having arbitrarily large joint pre-fixpoints is sufficient for coproducts of general monads over "reasonably" general categories.
	
	Colimits of monads were studied by Kelly \cite{\labelauthor:bib:Ke} who proved, inter alia, that for locally presentable base categories $\A$ every diagram of accessible monads has a colimit in Monad ($\A$). Kelly also proved a formula for the colimit. In case of coproducts of consistent monads over $\Set$ a much simpler formula was presented in \cite{\labelauthor:bib:ablm}, inspired by the work of Ghani and Ustalu \cite{\labelauthor:bib:gu}: let $\bkS$ and $\bkT$ be consistent $\lambda$-accessible monads with unit complements $\bar{S}$ and $\bar{T}$, respectively. Then the coproduct monad is given by
	\begin{equation*}
		A \mapsto A+ \colim_{i < \lambda} X_i + \colim_{i<\lambda} Y_i
	\end{equation*}
	Here $X_i$ and $Y_i$ are the $\lambda$-chains formed by colimits on limit ordinals, whereas the isolated steps are defined by the following mutual recursion:
	\begin{equation*}
		X_{i+1} = \bar{S}(Y_i + A) \textnormal{ and } Y_{i+1} = \bar{T}(X_i + A)
	\end{equation*}
	
	We prove that, unsurprisingly, the same formula holds for coproducts of separated monads on general categories.
	
\begin{acknowledgements}
	The author is very grateful to Paul Levy for his comments that have	improved the presentation. The fact that cointersections exist in the category of monads over $\Set^S$ (see Remark \ref{\labelauthor:def:rmk-311} below) follows from an independent argument that Levy has presented in a personal communication.
\end{acknowledgements}

\section{The Category of Monads}
	In this section some basic properties of the category of monads and monad morphisms are collected.
	
	\begin{notation}
		\begin{enumerate}[(a)]
			\item Given a category $\A$ we write $[\A,\A]$ for the category of endofunctors on $\A$ and natural transformations between them. And Monad ($\A$) denotes the category of monads and monad morphisms. The obvious forgetful functor is denoted by $V:$ Monad ($\A$) $\to [\A,\A]$.
			\item We use $\ast$ to denote the parallel (horizontal) composition of natural transformations: given $a:F \to F'$ and $b:G \to G'$, where all functors are endofunctor of $\A$, we have $a \ast b: FG \to F'G'$ given by $a \ast b = aG' \cdot Fb = F'b \cdot aG$. Recall also the interchange law:
			\begin{equation*}
			(c \ast d) \cdot (a \ast b) = (c \cdot a) \ast (d \cdot b).
			\end{equation*}
		\end{enumerate}
	\end{notation}

	\begin{proposition} \label{\labelauthor:pro:P-lim}
		The forgetful functor of Monad ($\A$) creates limits.
	\end{proposition}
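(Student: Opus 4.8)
The plan is to run the classical argument by which a forgetful functor of algebraic structure creates limits: transport the structure maps along the universal property of the limit of underlying objects, and verify every required equation by exploiting that the projections of a limit are jointly monic. Concretely, let $D\colon I \to$ Monad($\A$) be a diagram with $D(i) = \bkS_i = (S_i,\eta_i,\mu_i)$ and connecting monad morphisms $h_f = Df\colon S_i \to S_j$ for $f\colon i\to j$, and suppose a limit cone $(S,(\pi_i\colon S\to S_i)_{i\in I})$ of the underlying diagram $VD$ exists in $[\A,\A]$. Recall that the $\pi_i$ are then jointly monic, i.e. two transformations into $S$ agreeing after composition with every $\pi_i$ coincide. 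To show $V$ creates this limit I must equip $S$ with a monad structure making every $\pi_i$ a monad morphism, prove that the resulting cone is a limit in Monad($\A$), and prove that this lifting of the cone is the unique one.

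First I would construct the unit and multiplication. Since each $h_f$ preserves units, the transformations $\eta_i\colon \Id \to S_i$ form a cone over $VD$ with vertex $\Id$, inducing a unique $\eta\colon \Id \to S$ with $\pi_i\cdot\eta = \eta_i$. For the multiplication, I observe that the transformations $\mu_i\cdot(\pi_i\ast\pi_i)\colon SS \to S_i$ form a cone over $VD$ with vertex $SS$: for $f\colon i\to j$, using that $h_f$ is a monad morphism and then the interchange law, $h_f\cdot\mu_i\cdot(\pi_i\ast\pi_i) = \mu_j\cdot(h_f\ast h_f)\cdot(\pi_i\ast\pi_i) = \mu_j\cdot((h_f\cdot\pi_i)\ast(h_f\cdot\pi_i)) = \mu_j\cdot(\pi_j\ast\pi_j)$. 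Hence there is a unique $\mu\colon SS\to S$ with $\pi_i\cdot\mu = \mu_i\cdot(\pi_i\ast\pi_i)$. The point worth stressing is that I do \emph{not} claim $SS$ to be the limit of the $S_iS_i$ (pointwise limits need not commute with this composite); I only use the universal property of $S$ as the limit of the $S_i$, applied to a well-chosen cone out of $SS$.

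Next I would verify the monad axioms and the cone property. Each unit law and associativity for $(S,\eta,\mu)$ is an equation between transformations into $S$, so by joint monicity it suffices to check it after composing with each $\pi_i$; substituting the defining identities $\pi_i\cdot\eta = \eta_i$ and $\pi_i\cdot\mu = \mu_i\cdot(\pi_i\ast\pi_i)$ and applying the interchange law reduces each axiom to the corresponding axiom of $\bkS_i$. By construction those same two identities are exactly the conditions making each $\pi_i$ a monad morphism, so $(\bkS,(\pi_i))$ is a cone in Monad($\A$). For the universal property, given a competing cone of monad morphisms $\tau_i\colon \bkT \to \bkS_i$ with $\bkT = (T,\eta^T,\mu^T)$, the underlying $\tau_i$ induce a unique $\tau\colon T\to S$ with $\pi_i\cdot\tau = \tau_i$; that $\tau$ preserves unit and multiplication is checked once more by composing with the jointly monic $\pi_i$ and invoking that each $\tau_i$ is a monad morphism together with the interchange law, and faithfulness of $V$ yields uniqueness of $\tau$ as a monad morphism.

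Finally, uniqueness of the lift is immediate: any monad structure $(\eta',\mu')$ on $S$ turning every $\pi_i$ into a monad morphism must satisfy $\pi_i\cdot\eta' = \eta_i$ and $\pi_i\cdot\mu' = \mu_i\cdot(\pi_i\ast\pi_i)$, so joint monicity forces $\eta' = \eta$ and $\mu' = \mu$, which is precisely what creation of limits demands. I expect the only genuinely delicate step to be the construction of $\mu$ — specifically, recognizing that it must be obtained from the universal property of $S$ rather than from any limit description of $SS$; all remaining verifications are routine bookkeeping made uniform by the joint monicity of the limit projections.
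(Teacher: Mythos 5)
Your proposal is correct and follows essentially the same route as the paper: the unit and multiplication on the limit $S$ are forced by requiring the projections to be monad morphisms, with $\mu$ obtained from the universal property of $S$ applied to the cone $\mu_i\cdot(\pi_i\ast\pi_i)$ out of $SS$, and all axioms and the universal property are verified via the joint monicity of the projections. Your explicit check that these transformations form a cone, and your remark that no limit description of $SS$ is needed, merely spell out a step the paper leaves implicit.
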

	
	\subsection*{Remark}
		Recall that creation of limits means that for every diagram $D$ in Monad ($\A$) with a limit cone $p_d:T \to WDd$ of the underlying diagram in $[\A,\A]$ there exists a unique structure of a monad on $T$ for which each $p_d$ is a monad morphism. Moreover, the resulting cone is a limit in Monad ($\A$).
		\begin{proof}
			For the given diagram
			\begin{equation*}
				D: \D \to \textnormal{ Monad }(\A)
			\end{equation*}
			denote the objects by
			\begin{equation*}
				Dd = (T_d, \mu_d, \eta_d) \qquad (d \in \obj \D).
			\end{equation*}
			Given a limit cone $p_d: T \to T_d$, the unit of the monad on $T$ is, necessarily, the unique natural transformation
			\begin{equation*}
				\eta^T: \Id \to T \textrm{ with } p_d \cdot \eta^T = \eta_d \qquad(d \in \obj \D).
			\end{equation*}
			(Recall that $p_d$'s are required to preserve unit.) And the multiplication $\mu^T: T \cdot T \to T$ is, necessarily, the unique natural transformation for which the squares
			\begin{equation*}
				\xymatrix{
					T \cdot T \ar[r]^{\mu^T} \ar[d]_{p_d \ast p_d}
					&
					T \ar[d]^{p_d}
					\\
					T_d \cdot T_d \ar[r]_{\mu_d}
					&
					T_d
				}
			\end{equation*}
			commute for all $d \in \obj \D$. The verification of the monad axioms is easy. To verify that this is a limit cone, let $q_d:(S, \mu^S, \eta^S) \to (T_d, \mu_d, \eta_d)$ be a cone of $D$. There exists a unique natural transformation $g:S \to T$ with $q_d = p_d \cdot q(d \in \obj \D)$. It is a monad morphism. Indeed, the axiom $q \cdot \eta^S = q^T$ follows, since $(p_d)$ is a monocone, from
			\begin{equation*}
				p_d \cdot (q \cdot \eta^S) = q_d \cdot \eta^S = \eta_d = p_d \cdot \eta^T.
			\end{equation*}
			Analogously, the axiom $q \cdot \mu^S = \mu^T \cdot q \ast q$ follows from
			\begin{equation*}
				p_d \cdot (\mu^T \cdot (q \ast q)) = \mu_d \cdot (p_d \ast p_d) \cdot (q \ast q) = p_d \cdot \mu^T \cdot (q \ast q)
			\end{equation*}
		\end{proof}

	\begin{corollary}
		Limits of monads over a complete category $\A$ are computed object-wise (on the level of $\A$).
	\end{corollary}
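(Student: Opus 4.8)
The plan is to deduce this immediately from Proposition~\ref{\labelauthor:pro:P-lim} together with the standard fact that limits in functor categories are formed object-wise. Since $\A$ is complete, the endofunctor category $[\A,\A]$ is complete as well, and for any diagram $E:\D \to [\A,\A]$ a limit is computed pointwise: at each $A \in \A$ one forms the limit in $\A$ of the diagram $d \mapsto (Ed)(A)$, and this assignment assembles into an endofunctor that serves as the limit in $[\A,\A]$. This pointwise description of functor-category limits is the only external input I would invoke.

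Now take a diagram $D:\D \to \textnormal{Monad}(\A)$ with objects $Dd = (T_d,\mu_d,\eta_d)$. By Proposition~\ref{\labelauthor:pro:P-lim} the forgetful functor $V:\textnormal{Monad}(\A) \to [\A,\A]$ creates limits; since $\A$ complete makes $[\A,\A]$ complete, the underlying diagram $VD$ has a limit, and hence the limit of $D$ exists and its underlying endofunctor $T = V(\lim D)$ is precisely the limit of $VD$ in $[\A,\A]$. Combining this with the pointwise description above, the value $TA$ is the limit in $\A$ of $d \mapsto T_d A$, with the projections $p_d : T \to T_d$ given object-wise by $p_{d,A}: TA \to T_d A$. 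This is exactly the assertion that the limit monad is computed on the level of $\A$.

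The argument is routine and I foresee no real obstacle: the substantive content is entirely carried by the creation-of-limits statement already established. The only point that warrants a sentence of care is the passage from ``limits in $[\A,\A]$'' to ``limits in $\A$'', i.e.\ recalling that the evaluation functors $[\A,\A] \to \A$ jointly create (indeed preserve and reflect) limits when $\A$ is complete. Granting that, the conclusion follows at once, and no fresh verification of the monad axioms is required here, since that bookkeeping was already discharged in the proof of Proposition~\ref{\labelauthor:pro:P-lim}.
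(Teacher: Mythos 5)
Your argument is correct and is exactly the intended one: the paper states this corollary without proof precisely because it follows immediately from Proposition~\ref{\labelauthor:pro:P-lim} together with the pointwise computation of limits in $[\A,\A]$ for complete $\A$. Nothing further is needed.
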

	
	\begin{proposition} \label{\labelauthor:pro:Pro2.5}
		The forgetful functor of Monad ($\A$) creates absolute coequalizers.
	\end{proposition}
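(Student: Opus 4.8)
The plan is to show directly that $V$ creates absolute coequalizers, i.e.\ to produce on the candidate a unique monad structure and then check universality. Fix a parallel pair $f,g:(S,\mu^S,\eta^S)\to(T,\mu^T,\eta^T)$ of monad morphisms and let $c:T\to Q$ be a coequalizer of $Vf=f$ and $Vg=g$ in $[\A,\A]$ which is, moreover, absolute. Any monad structure making $c$ a monad morphism must satisfy $\eta^Q=c\cdot\eta^T$ and $\mu^Q\cdot(c\ast c)=c\cdot\mu^T$; since $c$ is epic (being a coequalizer) and, as I will argue, so is $c\ast c$, such a structure is necessarily unique. I therefore first put $\eta^Q:=c\cdot\eta^T:\Id\to Q$, and then construct $\mu^Q$.

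For the multiplication the decisive point is that $c\ast c:T\cdot T\to Q\cdot Q$ is again a coequalizer, namely of $f\ast f,\,g\ast g:S\cdot S\to T\cdot T$. This is exactly where absoluteness (and not merely epi-ness of $c$) enters: applying the functor $(-)\cdot T:[\A,\A]\to[\A,\A]$ to the absolute coequalizer $c$ shows that the whiskered map $cT:T\cdot T\to Q\cdot T$ is a coequalizer of $fT,gT$; applying $Q\cdot(-)$ shows $Qc:Q\cdot T\to Q\cdot Q$ is a coequalizer of $Qf,Qg$; and since the image of an absolute coequalizer under any functor is again absolute, the composite $c\ast c=Qc\cdot cT$ is a coequalizer of $f\ast f,g\ast g$, in particular an epimorphism. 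Granting this, I observe that $c\cdot\mu^T$ coequalizes this pair, because $\mu^T\cdot(f\ast f)=f\cdot\mu^S$ and $\mu^T\cdot(g\ast g)=g\cdot\mu^S$ (as $f,g$ preserve multiplication), whence
\begin{equation*}
c\cdot\mu^T\cdot(f\ast f)=c\cdot f\cdot\mu^S=c\cdot g\cdot\mu^S=c\cdot\mu^T\cdot(g\ast g)
\end{equation*}
using $c\cdot f=c\cdot g$. The universal property of $c\ast c$ then gives a unique $\mu^Q:Q\cdot Q\to Q$ with $\mu^Q\cdot(c\ast c)=c\cdot\mu^T$, which says precisely that $c$ preserves multiplication.

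It remains to verify the monad axioms for $(Q,\mu^Q,\eta^Q)$ and that $c$ is a coequalizer in Monad $(\A)$. Each axiom reduces to the corresponding axiom for $T$ by precomposing with a suitable epimorphism and cancelling it: the two unit laws follow after precomposing with the epimorphism $c$, and associativity follows after precomposing with $c\ast c\ast c:T^3\to Q^3$, which is once more a coequalizer (hence epic) by applying the whiskering argument three times, using the defining relation $\mu^Q\cdot(c\ast c)=c\cdot\mu^T$, naturality, and the interchange law. Finally, for universality, let $h:(T,\dots)\to(R,\dots)$ be a monad morphism with $h\cdot f=h\cdot g$; the coequalizer property of $c$ in $[\A,\A]$ yields a unique $k:Q\to R$ with $k\cdot c=h$, and that $k$ is a monad morphism follows by cancelling the epimorphism $c$ on the unit equation and the epimorphism $c\ast c$ on the multiplication equation. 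The hard part is the crux of the second paragraph: showing that $c\ast c$ (and $c\ast c\ast c$) are genuine coequalizers, since a plain coequalizer in $[\A,\A]$ need not remain one under the whiskering functors, so this step is where absoluteness is indispensable.
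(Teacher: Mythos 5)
Your overall strategy is exactly the paper's: the unit is forced to be $c\cdot\eta^T$, the multiplication is obtained from the universal property of $c\ast c$ as a coequalizer of $f\ast f$ and $g\ast g$, and the monad axioms and the universal property in Monad~$(\A)$ follow by cancelling the epimorphisms $c$, $c\ast c$ and $c\ast c\ast c$. All of that is fine. The one step that does not hold up as written is precisely the one you single out as the crux: you justify that $c\ast c$ is a coequalizer of $f\ast f$ and $g\ast g$ by composing the two whiskered coequalizers $cT$ (of $fT,gT$) and $Qc$ (of $Qf,Qg$). The inference ``a coequalizer of $(fT,gT)$ followed by a coequalizer of $(Qf,Qg)$ is a coequalizer of $(f\ast f,g\ast g)$'' is not valid in general: it is the same issue as with tensoring coequalizers, where $c_1\otimes c_2$ factors as two one-variable coequalizers but is typically a \emph{proper quotient} of the coequalizer of $f_1\otimes f_2$ and $g_1\otimes g_2$ (already in $\Set$ with $\otimes=\times$: take $f_i,g_i:1\to\{0,1\}$ picking $0$ and $1$; then $c_1\times c_2$ has codomain $1$, while the coequalizer of $f_1\times f_2$ and $g_1\times g_2$ has three elements). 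Without reflexivity of the pair, or some extra input, composing the two whiskerings proves only that $c\ast c$ \emph{coequalizes} $f\ast f$ and $g\ast g$ and is epic, not that it is universal. Your appeal to ``the image of an absolute coequalizer under any functor is again absolute'' is the right ingredient but is attached to the wrong functors.

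The fix is one line, and it is what the paper does: apply the \emph{single} endofunctor of $[\A,\A]$ given by $X\mapsto X\cdot X$ on objects and $h\mapsto h\ast h$ on morphisms (functoriality is the interchange law) to the absolute coequalizer diagram $S\rightrightarrows T\to Q$. This immediately exhibits $c\ast c$ as an (absolute) coequalizer of $f\ast f$ and $g\ast g$, with no composition of separate coequalizers needed; the cubing functor handles $c\ast c\ast c$ likewise. With that substitution your argument is complete and coincides with the paper's proof.
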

	
	\subsection*{Remark}
		Recall that this means that given a parallel pair of monad morphisms $p,q: \mathds{S \to T}$ whose coequalizers in $[\A,\A]$
		\begin{equation*}
			\xymatrix{
				S \ar@/^/[r]^{p} \ar@/_/[r]_{q}
				&
				T \ar[r]^c
				&
				C
			}
		\end{equation*}
		is absolute (that is, preserved by every functor with domain $[\A,\A]$), there exists a unique monad structure on $C$ making $c$ a monad morphism. Moreover, $c$ is a coequalizer of $p$ and $q$ in Monad $(\A)$.
		
		\begin{proof}
			The unit of $C$ is, necessarily,
			\begin{equation*}
				\eta^C = c \cdot \eta^T.
			\end{equation*}
			To define the multiplication $\mu^C: C \cdot C \to C$, use the endofunctor of $[\A,\A]$ defined by $ X \mapsto X \cdot X$ on objects and by $f \mapsto f \ast f$ on morphisms. Since $c \ast c$ is the coequalizer of $p \ast p$ and $q \ast q$, we have a unique $\mu^C$ for which $c$ preserves multiplication:
			\begin{equation*}
				\xymatrix{
					S \cdot S \ar@/^/[r]^{p \ast p} \ar@/_/[r]_{q \ast q} \ar[d]_{\mu^S}
					&
					T \cdot T \ar[r]^{c \ast c} \ar[d]_{\mu^T}
					&
					C \cdot C \ar[d]^{\mu^C}
					\\
					S \ar@/^/[r]^{p} \ar@/_/[r]_{q}
					&
					T \ar[r]_c
					&
					C
				}
			\end{equation*}
			The verification that $(C, \eta^C, \mu^C)$ is a monad and $c$ is a coequalizer in Monad ($\A$) is easy.
		\end{proof}
		
		\begin{definition} \label{\labelauthor:def:D-fix}
			An object $Z$ is a \emph{fixpoint} of an endofunctor $H$ if $HZ \simeq Z$, and it is a \emph{pre-fixpoint} of $H$ if $HZ$ is a subobject of $Z$.
			
			We say that $H$ has \emph{arbitrarily large pre-fixpoints} provided that for every object $X$ there exists a pre-fixpoint $Z$ of $H$ with $Z \simeq Z + X$.
		\end{definition}
		
		\begin{example}
			A monos-preserving endofunctor $H$ of the category $\Set^S$ of many-sorted sets has arbitrarily large pre-fixpoints iff for every cardinal $\alpha$ there exists a pre-fixpoint of $H$ all components of which have at least $\alpha$ elements.
		\end{example}		
		
		\begin{notation}
			\begin{enumerate}[(a)]
				\item For an endofunctor $H$ of $\A$ an \emph{algebra} is a pair $(A,a)$ consisting of an object $A$ and $a$ morphism $a:HA \to A$. Homomorphisms of algebras are defined by the usual commutative square. The resulting category is denoted by $\Alg H$.
				\item $\mu H$ denotes the initial algebra (if it exists). By Lambek's Lemma \cite{\labelauthor:bib:l} its algebra structure is invertible, thus, $\mu H$ is a fixpoint of $H$.
				\item If $H$ has \emph{free algebras}, i.\,e., the forgetful functor $\Alg H \to \A$ has a left adjoint, then $\bkF_H$ denotes the corresponding monad over $\A$. And $\hat{\eta}: \Id \to \bkF_H$ denotes its unit, whose components are the universal arrows of the free algebras.
			\end{enumerate}
		\end{notation}
		
		\begin{lemma} \label{\labelauthor:lem:L-27}
			Every accessible endofunctor of a cocomplete category with monic coproduct injections has arbitrarily large pre-fixpoints.
		\end{lemma}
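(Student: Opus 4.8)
The plan is to realize the required pre-fixpoint as a free $H$-algebra on a suitably "saturated" object. First I would recall that an accessible endofunctor of a cocomplete category has free algebras. Suppose $H$ preserves $\lambda$-filtered colimits. For each object $Y$ the endofunctor $H(-)+Y$ again preserves $\lambda$-filtered colimits, since a constant functor and the functor $(-)+Y$ preserve all connected colimits, and filtered colimits are connected. Hence the initial-algebra chain $0 \to (H0+Y) \to \dots$ of length $\lambda$ converges, and its colimit is the initial algebra $\bkF_H Y = \mu\bigl(H(-)+Y\bigr)$. By Lambek's Lemma its structure map is invertible, so
\begin{equation*}
	\bkF_H Y \simeq H\bkF_H Y + Y .
\end{equation*}

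Given $X$, I would then choose $Y$ so that coproduct with $X$ is absorbed. Put $\bar{X} = \coprod_{n<\omega} X$, the countable copower; since the index set satisfies $\omega + 1 \cong \omega$ we have $\bar{X} + X \simeq \bar{X}$. Now set $Z = \bkF_H \bar{X}$, so that
\begin{equation*}
	Z \simeq HZ + \bar{X} .
\end{equation*}
This $Z$ is the desired pre-fixpoint. That $HZ$ is a subobject of $Z$ is immediate: the coproduct injection $HZ \to HZ + \bar{X} \simeq Z$ is monic, precisely by the standing hypothesis that coproduct injections are monic. And the absorption property yields
\begin{equation*}
	Z + X \simeq HZ + \bar{X} + X \simeq HZ + \bar{X} \simeq Z ,
\end{equation*}
so $Z \simeq Z + X$, as the definition of arbitrarily large pre-fixpoints requires.

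The one step I would spell out rather than treat as routine is the existence of the free algebra $\bkF_H \bar{X}$, i.e. the convergence of the initial-algebra chain of the accessible functor $H(-)+\bar{X}$; this is where cocompleteness of $\A$ together with accessibility of $H$ genuinely enter, and it is the only potential obstacle. The monicity hypothesis, by contrast, is used exactly once, to read off that $Z$ is a pre-fixpoint from the isomorphism $Z \simeq HZ + \bar{X}$.
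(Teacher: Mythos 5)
Your proof is correct and takes essentially the same route as the paper's: form the free $H$-algebra on an infinite copower of $X$, use the isomorphism $\bkF_H \bar{X} \simeq H\bkF_H\bar{X} + \bar{X}$ together with the absorption $\bar{X}+X \simeq \bar{X}$, and read off pre-fixpointhood from the monicity of coproduct injections. The only differences are cosmetic: the paper takes an arbitrary infinite copower and simply cites the literature for the existence of free algebras over accessible functors, whereas you fix the countable copower and sketch the convergence of the free-algebra chain explicitly.
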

		
		\begin{proof}
			If $H$ is accessible, then every object $B$ generates a free $H$-algebra $\bar{B}$ and $\bar{B} = B + H\bar{B}$, see \cite{\labelauthor:bib:a}. Given an object $A$ let $B$ be an infinite copower of $A$. Then the equality $A+B \simeq B$ implies $A + \bar{B} \simeq \bar{B}$, and $\bar{B}$ is a pre-fixpoint.
		\end{proof}		
		
		\begin{theorem}[Barr \cite{\labelauthor:bib:b}] \label{\labelauthor:thm:T-B}
			If an endofunctor $H$ has free algebras, then $\bkF_H$ is a free monad on $H$. The converse holds whenever the base category is complete.
		\end{theorem}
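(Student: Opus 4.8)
The plan is to prove the two implications separately: the forward direction by an explicit universal construction out of the free $H$-algebras, and the converse by transporting the universal property of the free monad across the endomorphism monad, which is exactly where completeness will be used.

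For the forward direction, I would first exhibit the universal natural transformation $\kappa\colon H\to \bkF_H$. Writing $\hat{\eta}\colon\Id\to\bkF_H$ for the unit and $\varphi_A\colon H\bkF_H A\to\bkF_H A$ for the free-algebra structure carried by each $\bkF_H A$, I set $\kappa_A=\varphi_A\cdot H\hat{\eta}_A$; naturality follows because each $\bkF_H f$ is an $H$-algebra homomorphism. To verify the universal property, fix a monad $\bkS=(S,\eta^S,\mu^S)$ and a natural transformation $\alpha\colon H\to S$. The key device is to turn each free Eilenberg--Moore algebra $(SA,\mu^S_A)$ into an $H$-algebra $(SA,b_A)$ with $b_A=\mu^S_A\cdot\alpha_{SA}$; naturality of $\alpha$ makes $(A,a)\mapsto(A,a\cdot\alpha_A)$ a functor from the Eilenberg--Moore category of $\bkS$ to $\Alg H$. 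Then I define $h_A\colon\bkF_H A\to SA$ as the unique $H$-algebra homomorphism $(\bkF_H A,\varphi_A)\to(SA,b_A)$ with $h_A\cdot\hat{\eta}_A=\eta^S_A$, which is precisely unit preservation.

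The remaining verifications are then organized through the universal property. Naturality of $h$ is the usual uniqueness-of-extensions argument. For multiplication I would apply freeness at $\bkF_H\bkF_H A$: both $h_A\cdot\mu^{\bkF_H}_A$ and $\mu^S_A\cdot(h\ast h)_A$ are $H$-algebra homomorphisms $(\bkF_H\bkF_H A,\varphi_{\bkF_H A})\to(SA,b_A)$ --- the first as a composite of homomorphisms, the second because $\mu^S_A\cdot Sh_A$ is the adjoint transpose of $h_A$ for the Eilenberg--Moore adjunction of $\bkS$, which the functor above carries to an $H$-homomorphism --- and both restrict along $\hat{\eta}_{\bkF_H A}$ to $h_A$, using $\mu^S\cdot\eta^S S=\id$. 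The identity $h\cdot\kappa=\alpha$ reduces, via the homomorphism property of $h$ and naturality of $\alpha$, to $\mu^S\cdot S\eta^S=\id$. For uniqueness I would first record $\varphi_A=\mu^{\bkF_H}_A\cdot\kappa_{\bkF_H A}$ (immediate from the unit law and the fact that $\mu^{\bkF_H}$ is a homomorphism); then any monad morphism $g$ with $g\cdot\kappa=\alpha$ satisfies $g_A\cdot\varphi_A=\mu^S_A\cdot Sg_A\cdot\alpha_{\bkF_H A}=b_A\cdot Hg_A$, so $g_A$ is the same $H$-homomorphism as $h_A$.

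For the converse I would assume $\A$ complete and a free monad $\bkF$ on $H$ given, and produce free $H$-algebras through the \emph{endomorphism monad}. For an object $A$, completeness supplies the monad $\langle A,A\rangle$ whose underlying endofunctor sends $X$ to the power $A^{\A(X,A)}$, with the two defining properties that natural transformations $G\to\langle A,A\rangle$ correspond to morphisms $GA\to A$, and that monad morphisms $\bkM\to\langle A,A\rangle$ correspond to Eilenberg--Moore $\bkM$-algebra structures on $A$. Composing these with the universal property of $\bkF$ yields natural bijections between $H$-algebra structures on $A$, natural transformations $H\to\langle A,A\rangle$, monad morphisms $\bkF\to\langle A,A\rangle$, and $\bkF$-algebra structures on $A$; checking compatibility with underlying morphisms promotes this to an isomorphism between $\Alg H$ and the Eilenberg--Moore category of $\bkF$ over $\A$. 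Since the Eilenberg--Moore forgetful functor always has a left adjoint, so does $\Alg H\to\A$, whence $H$ has free algebras and the induced monad is $\bkF$, identifying it with $\bkF_H$. I expect the main obstacle to lie here rather than in the forward direction: the entire force of completeness is concentrated in the existence and universal property of $\langle A,A\rangle$, which is what converts a bare $H$-algebra structure $HA\to A$ into a genuine action $\bkF A\to A$; in the forward direction the only real subtlety is recognizing $\mu^S_A\cdot Sh_A$ as an $H$-homomorphism, isolated as the small functoriality lemma about $(A,a)\mapsto(A,a\cdot\alpha_A)$, with everything else being bookkeeping with the monad laws.
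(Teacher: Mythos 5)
The paper offers no proof of this theorem: it is quoted from Barr's article and used as a black box, so there is no internal argument to compare yours against. Your proof is correct and is, in substance, the classical one. The forward direction --- transporting Eilenberg--Moore algebras of $\bkS$ to $H$-algebras along $\alpha$, extending $\eta^S_A$ freely to $h_A$, and verifying the monad-morphism axioms by uniqueness of free extensions --- is exactly Barr's argument, and the individual verifications you indicate (including the identity $\varphi_A=\mu^{\bkF_H}_A\cdot\kappa_{\bkF_H A}$ on which your uniqueness step rests, and the observation that $\mu^S_A\cdot Sh_A$ is an $H$-homomorphism because it is an Eilenberg--Moore homomorphism between free algebras) all check out. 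The converse via the endomorphism monad $\langle A,A\rangle$ is likewise the standard route and correctly locates where completeness (together with local smallness, so that the powers $A^{\A(X,A)}$ are small products) enters. The one step you compress too much is the morphism level of the isomorphism $\Alg H\cong\A^{\bkF}$: the correspondence obtained from $\langle A,A\rangle$ only matches algebra \emph{structures} on a fixed object, and to see that $f\colon A\to B$ is an $H$-homomorphism precisely when it is an $\bkF$-homomorphism for the corresponding structures one runs the same chain of bijections through the monad $\{f,f\}$ obtained as the pullback of $\langle A,A\rangle\to\langle A,B\rangle\leftarrow\langle B,B\rangle$ --- another use of completeness. With that supplied, the argument is complete.
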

		
		\begin{example} \label{\labelauthor:ex:E-P}
			The power-set functor $\P$ has no fixpoint, hence, it does not generate a free monad.
		\end{example}
		
		\begin{construction}[see \cite{\labelauthor:bib:a}] \label{\labelauthor:con:C-W}
		For every object $X$ of $\A$ define the \emph{free-algebra chain} $W: \Ord \to \A$ (with objects $W_i$ and morphisms $w_{i,j}: W_i \to W_j$ for all ordinals $i \leq j$) uniquely up to natural isomorphism by the following transfinite induction:
		
		The objects are given by
		\begin{align*}
			&W_o = X\\
			&W_{i+1} = X + HW_i,
		\end{align*}
		and
		\begin{equation*}
			W_j = \colim_{i < j} W_i \textrm{ for limit ordinals }j.
		\end{equation*}
		The morphisms are as follows:
		\begin{align*}
			&w_{0,1}: X \to X + HX, \textrm{ coproduct injection}\\
			&w_{i+1,j+1} = \id_X + Hw_{i,j}
		\end{align*}
		and
		\begin{equation*}
			(w_{i,j})_{i<j} \textrm{ is a colimit cocone (for limit ordinals $j$).}
		\end{equation*}
		Whenever this chain \emph{converges} after $i$ steps, i.\,e., all connecting maps $w_{i,j}$ are isomorphisms, then as proved in \cite{\labelauthor:bib:a},
		\begin{equation*}
			W_i = F_HX
		\end{equation*}
		is the free algebra on $X$. More detailed, the two components of
		\begin{equation*}
			(w_{i,i+1})^{-1}: X + HW_i \to W_i
		\end{equation*}
		are the universal arrow and the algebra structure of $W_i$, respectively.
	\end{construction}
	
	\begin{definition} \label{\labelauthor:def:D-stable}
		A cocomplete category is said to have \emph{stable monomorphisms} if
		\begin{enumerate}[(a)]
			\item coproducts of parallel collections of monomorphisms are monic
		\end{enumerate}
		and
		\begin{enumerate}[(b)]
			\item colimits of chains of monomorphisms consist of monics, and the factorizing map of every cocone of monics is monic.
		\end{enumerate}
	\end{definition}
	
	\begin{example}
		Sets, graphs, posets, many-sorted sets and almost all "usual" varieties of algebras have stable monomorphisms. All presheaf categories have stable monomorphisms.
		
		Condition (b) implies that the unique morphism from $0$ to any given object is monic (since $0$ is the colimit of the empty chain). Thus rings are an example of a variety not having stable monomorphisms. Indeed, the initial ring is the ring $\bkZ$ of integers, and not all ring homomorphisms with this domain are monic.
	\end{example}
	
	\begin{theorem}[See \cite{\labelauthor:bib:takr}] \label{\labelauthor:teo:t-takr}
		Let $H$ be an endofunctor of a cocomplete category with stable monomorphisms. If $H$ preserves monomorphisms, the following conditions are equivalent:
		\begin{enumerate}[(1)]
			\item $H$ has free algebras
			\item for every object $X$ the free-algebra chain converges
		\end{enumerate}
		and
		\begin{enumerate}[(3)]
			\item for every object $X$ there exists an object $Z$ with
			\begin{equation*}
				HZ+X \textnormal{ a subobject of } Z.
			\end{equation*}
		\end{enumerate}		
	\end{theorem}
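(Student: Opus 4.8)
The plan is to establish the cycle $(2)\Rightarrow(1)\Rightarrow(3)\Rightarrow(2)$, with the last implication carrying essentially all of the content. First, $(2)\Rightarrow(1)$ is immediate from Construction \ref{\labelauthor:con:C-W}: if the free-algebra chain for $X$ converges, its stable value is the free algebra $\bkF_H X$. Next, $(1)\Rightarrow(3)$ is equally short. If $H$ has free algebras then, again by Construction \ref{\labelauthor:con:C-W}, the universal arrow and the algebra structure of $\bkF_H X$ assemble into an isomorphism $X + H\bkF_H X \xrightarrow{\cong} \bkF_H X$. Taking $Z=\bkF_H X$ we get $HZ+X\cong Z$, so in particular $HZ+X$ is a subobject of $Z$.

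It remains to prove $(3)\Rightarrow(2)$, and I would run it as a single simultaneous transfinite induction in three steps. Step A: the free-algebra chain $(W_i,w_{i,j})$ consists of monomorphisms. Here I use both hypotheses on $H$ together with both clauses of Definition \ref{\labelauthor:def:D-stable}. The base map $w_{0,1}=\inl$ is monic because under $X\cong X+0$ it is the coproduct of the monics $\id_X$ and $0\to HX$ (clause (a)); the successor map $w_{i+1,i+2}=\id_X + Hw_{i,i+1}$ is monic because $H$ preserves monomorphisms and by clause (a); and at a limit $j$ the colimit injections $w_{i,j}$ are monic, while the connecting map $w_{j,j+1}$ is the factorizing map of the cocone $(\id_X + Hw_{i,j})_{i<j}$ of monics and hence is monic by clause (b).

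Step B: build a bounding cocone of monos into $Z$. Fix the monomorphism $m:HZ+X\rightarrowtail Z$ furnished by (3), and define $m_i:W_i\to Z$ by recursion: put $m_0 = m\cdot\inr$; put $m_{i+1}=m\cdot\sigma\cdot(\id_X + Hm_i)$, where $\sigma:X+HZ\to HZ+X$ is the swap isomorphism; and at limit stages let $m_j$ be the factorizing map of $(m_i)_{i<j}$ through the colimit $W_j$. One checks simultaneously that $m_j\cdot w_{i,j}=m_i$, so that the $m_i$ form a cocone, and that each $m_i$ is monic (composites of monos at successor stages, clause (b) at limit stages). Thus $(m_i)$ exhibits the entire chain as an increasing family of subobjects of the single fixed object $Z$.

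Step C: force convergence. Since $m_{i+1}\cdot w_{i,i+1}=m_i$ with $m_{i+1}$ monic, the subobjects satisfy $[m_i]=[m_{i+1}]$ exactly when $w_{i,i+1}$ is an isomorphism. If the chain never converged, every $w_{i,i+1}$ would be a non-invertible monomorphism, producing a strictly increasing $\Ord$-indexed tower of subobjects of $Z$; this is impossible as soon as the subobjects of $Z$ form a set. Hence some $w_{i,i+1}$ is an isomorphism and the chain converges. I expect Step C to be the crux: the genuine point is that a fixed bound $Z$ cannot accommodate a strictly increasing transfinite tower of subobjects, and this is precisely where well-poweredness of $\A$ (available in all the concrete categories considered) is used. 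Steps A and B are routine, but must be carried out together so that monicity and the cocone identities remain available at every ordinal stage.
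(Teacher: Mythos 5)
The paper does not prove this theorem at all --- it is imported from \cite{\labelauthor:bib:takr} as a black box --- so there is no internal proof to compare against; your argument is the standard ``bounded chain'' proof from the literature, and its skeleton is sound. The implications $(2)\Rightarrow(1)$ and $(1)\Rightarrow(3)$ are correct, though for $(1)\Rightarrow(3)$ the cleaner citation is Remark \ref{\labelauthor:rem:r-plus} (Lambek's Lemma applied to $H(-)+X$ gives $F_HX\cong HF_HX+X$ directly from the existence of free algebras), rather than Construction \ref{\labelauthor:con:C-W}, which only describes the free algebra \emph{when} the chain converges. Steps A and B are routine and correctly use both clauses of Definition \ref{\labelauthor:def:D-stable} together with preservation of monos; note only that monicity of $0\to HX$ (needed for $w_{0,1}$ and for $m_0=m\cdot\inr$) comes from clause (b), as the paper itself observes after Definition \ref{\labelauthor:def:D-stable}.

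The one genuine issue is the point you yourself flag in Step C: the contradiction requires that the subobjects of the single object $Z$ form a set, i.e.\ well-poweredness, and this is \emph{not} among the theorem's stated hypotheses (cocompleteness, stable monomorphisms, $H$ preserving monos). Clauses (a) and (b) of stable monomorphisms do not imply well-poweredness, so as written your proof establishes the theorem only under that extra assumption. This is not a defect of your strategy --- the classical proofs in \cite{\labelauthor:bib:takr} and its descendants make the same smallness assumption, and every category to which the present paper applies the theorem ($\Set^S$, vector spaces, $\Set_\ast$, presheaf categories) is well-powered --- but you should either add well-poweredness explicitly to the hypotheses you are using, or say clearly that $(3)\Rightarrow(2)$ is proved only for well-powered $\A$. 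One further small point worth a sentence in a polished write-up: from ``some $w_{i,i+1}$ is an isomorphism'' you still need the standard induction showing that all later connecting maps $w_{j,k}$ ($i\le j\le k$) are then isomorphisms, so that the chain genuinely converges in the sense of Construction \ref{\labelauthor:con:C-W}.
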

	
	
	\begin{corollary} \label{\labelauthor:cor:C-free}
		Let $\A$ be a cocomplete category with stable monomorphisms. Every monos-preserving endofunctor with arbitrarily large pre-fixpoints generates a free monad.
	\end{corollary}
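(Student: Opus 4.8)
The plan is to reduce the statement to the two cited results that precede it. By Barr's Theorem~\ref{\labelauthor:thm:T-B}, it suffices to show that $H$ \emph{has free algebras}, since then $\bkF_H$ is automatically a free monad on $H$. Now $H$ preserves monomorphisms by hypothesis and $\A$ is cocomplete with stable monomorphisms, so Theorem~\ref{\labelauthor:teo:t-takr} applies and tells us that "$H$ has free algebras" is equivalent to its condition (3): for every object $X$ there exists an object $Z$ with $HZ + X$ a subobject of $Z$. Thus the entire task collapses to verifying condition (3) directly from the assumption of arbitrarily large pre-fixpoints.

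For the verification I would fix an object $X$ and invoke Definition~\ref{\labelauthor:def:D-fix} to choose a pre-fixpoint $Z$ of $H$ with $Z \simeq Z + X$. Being a pre-fixpoint provides a monomorphism $m \colon HZ \to Z$. Then I would form the coproduct $m + \id_X \colon HZ + X \to Z + X$ and use clause (a) of stable monomorphisms (Definition~\ref{\labelauthor:def:D-stable})---coproducts of parallel collections of monomorphisms are monic, and $\id_X$ is monic---to conclude that $m + \id_X$ is itself a monomorphism. Composing with the isomorphism $Z + X \simeq Z$ guaranteed by the choice of $Z$ yields a monomorphism $HZ + X \to Z$, which exhibits $HZ + X$ as a subobject of $Z$. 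This is precisely condition (3), so $H$ has free algebras and the corollary follows.

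The argument involves no real computation; its only genuine content is the bookkeeping that fuses the two separate pieces of the hypothesis---the monomorphism $HZ \hookrightarrow Z$ coming from pre-fixedness and the absorption isomorphism $Z \simeq Z + X$---into the single subobject $HZ + X \hookrightarrow Z$ demanded by Theorem~\ref{\labelauthor:teo:t-takr}(3). I expect the one point requiring care to be the appeal to stability clause (a): without knowing that a coproduct of monomorphisms remains monic, the map $m + \id_X$ need not be monic and the reduction would collapse. Since this is essentially the only place where the structural hypotheses on $\A$ enter (apart from cocompleteness, which is already needed to build the free-algebra chain underlying Theorem~\ref{\labelauthor:teo:t-takr}), it is worth flagging it explicitly rather than burying it in "the verification is easy".
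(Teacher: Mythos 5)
Your proposal is correct and follows exactly the paper's own argument: the paper likewise reduces to condition (3) of Theorem~\ref{\labelauthor:teo:t-takr} and obtains $HZ+X$ as a subobject of $Z+X\simeq Z$ from a pre-fixpoint $Z$ with $Z\simeq Z+X$. Your only addition is to make explicit the appeal to clause (a) of Definition~\ref{\labelauthor:def:D-stable} for the monicity of $m+\id_X$, which the paper leaves implicit.
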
	
	Indeed, we verify Condition (3) above: choose a pre-fixpoint $Z$ with $Z \simeq Z+X$ to get $HZ + X$ as a subobject of $Z+X \simeq Z$.

	\begin{remark} \label{\labelauthor:rem:r-mono}
		Under the assumptions of the above theorem the free monad $\bkF_H$ preserves monomorphisms. Indeed, let $m: X \to X'$ be a monomorphism. Denote by $W_i'$ the free-algebra chain above for $X'$. It is easy to see that we get a natural transformation
		\begin{equation*}
			m_i : W_i \to W_i' \qquad (i \in \Ord)
		\end{equation*}
		by
		\begin{align*}
			m_0 &= m:X \to X'\\
			m_{i+1} &= m + Hm_i : X + HW_i \to X' + HW_i'
		\end{align*}
		and
		\begin{equation*}
			m_j = \colim_{i<j} m_i \textnormal{ for limit ordinals }j.
		\end{equation*}
		An easy transfinite induction shows that $m_i$ is monic for every $i$: in the isolated step use the preservation of monics by $H$.
		
		We know from the above theorem that for some ordinal $i$ we have
		\begin{equation*}
			F_HX = W_i \textnormal{ and } F_HX' = W_i'.
		\end{equation*}
		For this ordinal we then also have
		\begin{equation*}
			F_H m = m_i
		\end{equation*}
		(which follows by an easy inspection of the proof of the above theorem). Thus, $F_Hm$ is monic.
	\end{remark}
	
	\begin{remark} \label{\labelauthor:rem:r-plus}
		If free $H$-algebras exist, the free monad $\bkF_H$ fulfils
		\begin{equation*}
			F_H = H \cdot F_H + \Id, \textnormal{ with } \hat{\eta} \textnormal{ as the right-hand injection.}
		\end{equation*}
		Indeed, for every object $X$ let $\bar{X}$ the free algebra on $\hat{\eta}_X: X \to \bar{X}$ with the algebra structure $\varphi_X: H \bar{X} \to \bar{X}$. Then $\bar{X} = H \bar{X} + X$ since $[\varphi_X, \hat{\eta}_X] : H \bar{X} + X \to \bar{X}$ is an isomorphism. (This is Lambek's Lemma applied to $H(-) + X$.) Since $F_H X = \bar{X}$, we see that the natural transformations $\varphi: HF_H \to F_H$ and $\hat{\eta}: \Id \to F_H$ form coproduct injections of $F_H = H \cdot F_H + \Id$.
	\end{remark}
	
\section{Set-Like Base Categories}
	For the base categories $\A$ such as
	\begin{enumerate}
		\item[] $\Set$ or $\Set^S$ (many-sorted sets)
		\item[] K-Vec (vector spaces)
		\item[] $\Set_\ast$ (sets and partial functions)
	\end{enumerate}
	we prove that the category of monads has coequalizers and cointersections. And it has colimits of every diagram of monos-preserving monads that posses arbitrarily large joint pre-fixpoints. In case of coproducts over $\A = \Set$ that last condition was proved to be "almost" necessary in \cite{\labelauthor:bib:ablm}: a collection of nontrivial monads over $\Set$ has a coproduct iff they posses arbitrarily large joint fixpoints. We explain this in more detail in the next section devoted to coproducts of separated monads.
	
	\begin{assumptions}
		Thoughout this section $\A$ denotes a category which has
		\begin{enumerate}[(a)]
			\item limits and colimits
			\item stable monomorphisms (see Definition \ref{\labelauthor:def:D-stable})
		\end{enumerate}
		and
		\begin{enumerate}[(c)]
			\item split epimorphisms.
		\end{enumerate}
	\end{assumptions}
	
	\begin{remark} \label{\labelauthor:rem:R-cow}
		\begin{enumerate}[(a)]
			\item $\A$ is cowellpowered: every object $X$ has only a set of quotients because $X$ has only a set of idempotent endomorphisms. Indeed, for every quotient $e:X \to Y$ choose a splitting $i: Y \to X$ and get an idempotent i.\,e, then two epimorphisms with the same idempotent yield the same quotient.
			\item $\A$ has (strong epi, mono)-factorizations of morphisms since every co\-well\-po\-wered, cocomplete category does, see \cite{\labelauthor:bib:ahs}, 15.17.
		\end{enumerate}
	\end{remark}
	
	\begin{lemma} \label{\labelauthor:lem:L-fact}
		Monad ($\A$) has (strong epi, mono)-factorization of morphisms, and every strong epimorphism has all components epic.
	\end{lemma}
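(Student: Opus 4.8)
The plan is to construct the factorization objectwise in $\A$ and then lift it to Monad $(\A)$. Recall from Remark \ref{\labelauthor:rem:R-cow} that $\A$ has (strong epi, mono)-factorizations, and observe that assumption (c) makes every epimorphism of $\A$ split; hence in $\A$ the strong epimorphisms are exactly the split epimorphisms, and in particular every endofunctor of $\A$ preserves them (functors preserve split epimorphisms). Starting from a monad morphism $f \colon \bkS \to \bkT$, I would factor each component as $f_A = m_A \cdot e_A$ with $e_A \colon SA \to CA$ a strong epimorphism and $m_A \colon CA \to TA$ a monomorphism of $\A$.

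First I would promote $C$ to an endofunctor and $e$, $m$ to natural transformations. For $g \colon A \to B$ the square with edges $e_A$, $m_B$, $e_B \cdot Sg$ and $Tg \cdot m_A$ commutes by naturality of $f$, so the diagonal fill-in against the monomorphism $m_B$ produces a unique $Cg$; functoriality of $C$ and naturality of $e$ and $m$ then follow from the uniqueness of diagonals. The monad structure on $C$ is transported as follows. The unit is $\eta^C = e \cdot \eta^S$, so that $m \cdot \eta^C = f \cdot \eta^S = \eta^T$, and the multiplication $\mu^C$ is the diagonal of the square
\begin{equation*}
\xymatrix{
S \cdot S \ar[r]^{e \ast e} \ar[d]_{e \cdot \mu^S} & C \cdot C \ar[d]^{\mu^T \cdot (m \ast m)} \\
C \ar[r]_{m} & T
}
\end{equation*}
which commutes by the interchange law applied to $f \cdot \mu^S = \mu^T \cdot (f \ast f)$. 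This is the crucial step, and the one place where assumption (c) is indispensable: the diagonal exists because $e \ast e$ is objectwise a strong epimorphism. Indeed $(e \ast e)_A = e_{CA} \cdot S e_A$ is a composite of split epimorphisms, $S e_A$ being split because $S$ preserves the split epimorphism $e_A$. Once $\mu^C$ is available, the monad axioms for $C$ and the statement that $e$ and $m$ are monad morphisms all follow by cancelling the objectwise epimorphism $e \ast e$ on the right.

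Next I would check that $(e, m)$ really is a (strong epi, mono)-factorization in Monad $(\A)$. Since $V$ is faithful and creates limits (Proposition \ref{\labelauthor:pro:P-lim}), it preserves and reflects monomorphisms, so a monad morphism is monic in Monad $(\A)$ iff it is objectwise monic; thus $m$ is monic. For $e$, given a monomorphism $n$ in Monad $(\A)$ and a commutative square $n \cdot u = v \cdot e$, the objectwise fill-in against the (objectwise monic) $n$ yields a natural transformation $d$ with $d \cdot e = u$ and $n \cdot d = v$, natural by uniqueness of diagonals; and $d$ is a monad morphism by the same cancellation of $e$ and $e \ast e$ as above. Uniqueness of $d$ holds because $e$, being objectwise split epic, is epic in Monad $(\A)$. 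Hence $e$ is a strong epimorphism.

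Finally, for the statement about components, let $q \colon \bkS \to \bkT$ be an arbitrary strong epimorphism and factor it objectwise as $q = m \cdot e$ with $e \colon \bkS \to \bkC$ and $m \colon \bkC \to \bkT$ as constructed above. The commutative square with left edge $q$, top edge $e$, right edge $m$ and bottom edge $\id_\bkT$ has, since $q$ is strong epic and $m$ monic, a diagonal $d \colon \bkT \to \bkC$ with $m \cdot d = \id_\bkT$; so the monomorphism $m$ is split epic, hence an isomorphism. Therefore each component $q_A = m_A \cdot e_A$ is epic, since $m_A$ is invertible and $e_A$ split epic. The main obstacle throughout is the construction of $\mu^C$: ensuring that $e \ast e$ stays objectwise epic is precisely what forces the split-epimorphism hypothesis, since a general endofunctor need not preserve ordinary epimorphisms.
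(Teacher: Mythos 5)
Your proof is correct and follows essentially the same route as the paper: objectwise (split epi, mono)-factorization, diagonal fill-in to obtain the endofunctor and the multiplication (using that $(e\ast e)_A = e_{CA}\cdot Se_A$ is a composite of split epimorphisms), and Proposition \ref{\labelauthor:pro:P-lim} to identify monos and strong epis in Monad $(\A)$. You merely spell out in more detail the two points the paper leaves implicit, namely why $e$ is strong epic in Monad $(\A)$ and why an arbitrary strong epimorphism has epic components via the factorization; both arguments are sound.
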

	
	\begin{proof}
		We prove that every monad morphism $f: \bkS \to \bkR$ has a factorization $f = m \cdot e$ in Monad ($\A$) where $m$ has monic components and $e$ has (split) epic ones. It follows easily from Proposition \ref{\labelauthor:pro:P-lim} that $m$ is a monomorphism in Monad ($\A$) and $e$ is a strong epimorphism.
		
		Indeed, start with a factorization of every $f_A$ in $\A$ as $SA \xrightarrow{e_A} RA \xrightarrow{m_A} TA$ with $e_A$ split epic and $m_A$ monic in $\A$. Then the diagonal fill-in makes $R$ an endofunctor with natural transformations $e:S \to R$ and $m:R \to T$. The monad unit of $R$ is $\mu^R = e\cdot\eta^\bkS: \Id \to R$. And the monad multiplication is given by the following diagonal fill-in:
		\begin{equation*}
			\xymatrix@+25pt{
				SSA \ar@{->>}[r]^{e_{A^\ast}e_A} \ar[d]_{\mu^S_A}
				&
				RRA \ar[r]^{m_{A^\ast}m_A} \ar@{-->}[d]_{\mu^R_A}
				&
				TTA \ar[d]^{\mu^T_A}
				\\
				SA \ar[r]_{e_A}
				&
				RA \ar[r]_{m_A}
				&
				TA
			}
		\end{equation*}
		This is well-defined because $e_A \ast e_A = e_{RA} \cdot R e_A$ is a epimorphism. To verify the unit axioms $\mu^R \cdot \eta R = \id$, consider the following diagram:
		\begin{equation*}
			\xymatrix@+15pt{
				SA \ar[r]^{e_A} \ar[d]_{\eta^S_{SA}}
				&
				RA \ar[r]^{m_A} \ar[d]_{\eta^R_{RA}}
				&
				TA \ar[d]^{\eta^T_{TA}}
				\\
				SSA \ar[r]^{e_A \ast e_A} \ar[d]_{\mu^S_A}
				&
				RRA \ar[r]^{m_A \ast m_A} \ar[d]_{\mu^R_A}
				&
				TTA \ar[d]^{\mu^T_A}
				\\
				SA \ar[r]_{e_A}
				&
				RA \ar[r]_{m_A}
				&
				TA
			}
		\end{equation*}
		Its outward square commutes since $\bkS$ and $\bkT$ both satisfy the corresponding axiom. Naturality of $\eta^S$ implies that the upper left-hand square commutes:
		\begin{equation*}
			e_{RA} \cdot Se_A \cdot \eta^S_{SA} = e_{RA} \cdot \eta^S_{RA} \cdot e_A = \eta^R_A \cdot e_A.
		\end{equation*}
		Analogously for the upper right-hand square. Consequently, the diagonal passage from $SA$ to $TA$ in the above diagram satisfies (due to $\mu^T_A \cdot \eta_{TA} = \id)$ the equality
		\begin{equation*}
			m_A \cdot (\mu^R_A \cdot \eta^R_{RA}) \cdot e_A = m_A \cdot e_A.
		\end{equation*}
		Since $n_A$ is strongly monic and $e_A$ epic, this implies $\mu^R_A \cdot \eta^R_{RA} = \id$. \smallskip
		
		The verification of the other unit axiom $\mu^R \cdot R \eta^R = \id$ is analogous. \smallskip
		
		The proof of the associativity
		\begin{equation*}
			\mu^R \cdot R \mu^R = \mu^R \cdot \mu^R R
		\end{equation*}
		follows from the following diagram:
		\begin{equation*}
			\xymatrix@+25pt{
				SSSA \ar[r]^{e_A\ast e_A\ast e_A} \ar@<-2pt>[d]_{\mu^S_{SA}} \ar@<+2pt>[d]^{\mu^S_A}
				&
				RRRA \ar[r]^{m_A \ast m_A \ast m_A} \ar@<-2pt>[d]_{\mu^R_{RA}} \ar@<+2pt>[d]^{R \mu^R_A}
				&
				TTTA \ar@<-2pt>[d]_{\mu^T_{TA}} \ar@<+2pt>[d]^{T \mu^T_A}
				\\
				SSA \ar[r]^{e_A \ast e_A} \ar[d]_{\mu^S_A}
				&
				RRA \ar[r]^{m_A \ast m_A} \ar[d]_{\mu^R_A}
				&
				TTA \ar[d]^{\mu^T_A}
				\\
				SA \ar[r]_{e_A}
				&
				RA \ar[r]_{m_A}
				&
				TA
			}
		\end{equation*}
		We only need to check that the epimorphism $e_A \ast e_A \ast e_A$ merges the above parallel pair. Since $m_A$ is a monomorphism and the outward square of the above diagram is the following commutive square
		\begin{equation*}
			\xymatrix@+15pt{
				SSSA \ar[r]^{f_A \ast f_A \ast f_A} \ar[d]_{\mu^S_A \cdot S \mu^S_A}
				&
				TTTA \ar[d]^{\mu^T_A \cdot T \mu^T_A}
				\\
				SA \ar[r]_{f_A}
				&
				TA
			}
		\end{equation*}
		the accociativity of $\mu^S$ and $\mu^T$ clearly implies that of $\mu^R$.
	\end{proof}
	Recall from Definition \ref{\labelauthor:def:D-fix} the concept of arbitrarily large pre-fixpoints of an endofunctor. Here is a "collective" version:
	
	\begin{definition} \label{\labelauthor:def:D-34}
		A collection $F_i\,(i \in I)$ of endofunctors is said to have \emph{arbitrarily large joint pre-fixpoints} if for every object $A$ and every cardinal $\alpha > 0$ there exists a joint pre-fixpoint $X$ such that $X+A \simeq X \simeq \coprod_\alpha X$.
	\end{definition}
	
	\begin{example}
		For categories $\Set$ and K-Vec or $\Set_\ast$ this means that for every cardinal $\alpha$ there exists a joint pre-fixpoint of cardinality at least $\alpha$. (In K-Vec use the fact that for infinite cardinals $\alpha \geq \card K$ dimension $\alpha$ is equivalent to cardinality $\alpha$.)
		
		For many-sorted sets, $\Set^S$, this means that for every cardinal $\alpha$ there exists a joint pre-fixpoint whose components have cardinalities at least $\alpha$. 
	\end{example}
	
	
	\begin{proposition}
		Every collection of accessible endofunctors has arbitrarily large joint pre-fixpoints.
	\end{proposition}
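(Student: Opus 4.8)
The plan is to reduce the joint statement to the single-functor Lemma \ref{\labelauthor:lem:L-27} by passing to the coproduct endofunctor $F = \coprod_{i\in I}F_i$, and then to manufacture a pre-fixpoint of $F$ that also satisfies the copower condition. Two preliminary facts about $F$ are needed. Since $I$ is a set, I would choose a regular cardinal $\lambda$ exceeding all the ranks $\lambda_i$ of the $F_i$; a $\lambda$-filtered colimit is in particular $\lambda_i$-filtered, so each $F_i$ preserves it, and since coproducts commute with colimits so does $F$, making $F$ $\lambda$-accessible. Furthermore, every pre-fixpoint of $F$ is automatically a joint pre-fixpoint: if $FX = \coprod_i F_iX$ is a subobject of $X$, then so is each $F_iX$, because the coproduct injection $F_iX \to \coprod_j F_jX$ is monic (stable monomorphisms yield monic coproduct injections).

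Now I would fix an object $A$ and a cardinal $\alpha>0$ and pick an infinite cardinal $\kappa \geq \alpha$. The heart of the argument is the endofunctor
\[ H(Y) = \coprod_\kappa(A + FY), \]
which is again $\lambda$-accessible, since $F$ is and the operations $A + (-)$ and $\coprod_\kappa(-)$ preserve $\lambda$-filtered colimits. Being accessible on a cocomplete category, $H$ has free algebras and hence an initial algebra, so I set $X = \mu H$; Lambek's Lemma then gives the single defining isomorphism $X \simeq \coprod_\kappa(A + FX)$.

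From this one isomorphism the three conditions of Definition \ref{\labelauthor:def:D-34} drop out. Writing $Z = A + FX$ we have $X \simeq \coprod_\kappa Z$, whence $\coprod_\alpha X \simeq \coprod_{\alpha\cdot\kappa}Z \simeq \coprod_\kappa Z \simeq X$ because $\alpha\cdot\kappa = \kappa$; distributing the copower gives $X \simeq \coprod_\kappa A + \coprod_\kappa FX$, and $1+\kappa = \kappa$ then yields $A + X \simeq X$; finally a single summand $A+FX$ of $X$ provides monic injections $FX \hookrightarrow A+FX \hookrightarrow X$, so $X$ is a pre-fixpoint of $F$ and hence, by the first paragraph, a joint pre-fixpoint of the $F_i$. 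The step I expect to be the real obstacle is this copower-idempotency $X \simeq \coprod_\alpha X$: it cannot be obtained by merely enlarging a single free algebra as in Lemma \ref{\labelauthor:lem:L-27}, since $F$ need not commute with coproducts and a copower of a pre-fixpoint need not remain one. Folding $\coprod_\kappa$ directly into $H$ is exactly what forces all three conditions simultaneously, leaving only the identities $\alpha\cdot\kappa = \kappa = 1+\kappa$ and the monicity of coproduct injections to check.
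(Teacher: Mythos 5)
Your argument is correct and is essentially the paper's own proof: the paper likewise passes to the coproduct endofunctor $H=\coprod_r H_r$, notes that any pre-fixpoint of it is a joint pre-fixpoint, and then takes the free algebra of the copower $\alpha\bullet H$ on $\alpha\bullet A$ --- which is exactly the initial algebra of your $Y\mapsto\coprod_\kappa(A+FY)$ --- reading off $\bar B\simeq\alpha\bullet(A+H\bar B)$ and hence all three conditions just as you do. The only cosmetic difference is that you spell out the cardinal arithmetic and the monicity of coproduct injections, which the paper leaves implicit.
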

	
	\begin{proof}
		If $H_r (r \in R)$ are accessible endofunctors, then so is $H = \coprod_{r \in R} H_r$. And every pre-fixpoint of $H$ is a joint pre-fixpoint of all $H_r$. Thus our task is for a given object $A$ and an infinite cardinal $\alpha$, to find a pre-fixpoint $X$ of $H$ with $X \simeq A + X \simeq \alpha \cdot X$. 
%
		The copower $\alpha \bullet H$ of $\alpha$ copies of $H$ is accessible, thus, it has a free algebra on $B = \alpha \bullet A$. As in the proof of Lemma \ref{\labelauthor:lem:L-27} this free algebra $\bar{B}$ fulfils
		$$
			\bar{B} = \alpha \bullet A + \alpha \bullet H \bar{B} = \alpha \bullet (A + H \bar{B})
		$$
		Obviously, $H \bar{B}$ is a subobject of $\alpha \cdot H \bar{B}$, hence, a pre-fixpoint of $H$. And $\bar{B} \simeq A + \bar{B} \simeq \alpha \bullet \bar{B}$.
	\end{proof}
	
	\begin{theorem} \label{\labelauthor:thm:T-33}
		Every small collection of monos-preserving monads with arbitrarily large joint pre-fixpoints has a coproduct in Monad ($\A$).
	\end{theorem}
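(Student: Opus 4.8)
The plan is to realize the coproduct as a \emph{joint quotient of free monads}. Write $\bkS_r=(S_r,\eta_r,\mu_r)$ for $r\in R$. First I would observe that the endofunctor $H:=\coprod_{r\in R}S_r$ again preserves monomorphisms (coproducts of monics are monic by stability) and has arbitrarily large pre-fixpoints: given $A$, pick a joint pre-fixpoint $X$ with $X\simeq X+A\simeq\coprod_\alpha X$ for a cardinal $\alpha\geq\card R$ (Definition \ref{\labelauthor:def:D-34}); then $HX=\coprod_r S_rX$ is a subobject of $\coprod_r X\simeq\coprod_\alpha X\simeq X$, so $HX$ is a subobject of $X$. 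By Corollary \ref{\labelauthor:cor:C-free} the free monad $\bkF_H$ exists, and the universal property of free monads identifies it with the coproduct of the free monads $\bkF_{S_r}$ in $\text{Monad}(\A)$, with injections $j_r:\bkF_{S_r}\to\bkF_H$ induced by the coproduct injections $S_r\to H$.

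Next I would present each $\bkS_r$ as a quotient of $\bkF_{S_r}$. The identity $S_r\to S_r$ induces, by freeness, a monad morphism $\pi_r:\bkF_{S_r}\to\bkS_r$; using $\bkF_{S_r}=S_r\bkF_{S_r}+\Id$ (Remark \ref{\labelauthor:rem:r-plus}) and the unit law one checks that $\pi_r$ is split epic on each component, hence a strong epimorphism in $\text{Monad}(\A)$ by Lemma \ref{\labelauthor:lem:L-fact}. The candidate coproduct $M$ is then the universal monad under $\bkF_H$ for which every composite $\bkF_{S_r}\xrightarrow{j_r}\bkF_H\to M$ factors through $\pi_r$; equivalently, $M$ is the coequalizer in $\text{Monad}(\A)$ of the reflexive pair $\bkF_{\coprod_r S_rS_r}\rightrightarrows\bkF_H$ (both free monads, since $\coprod_r S_rS_r$ is again monos-preserving with large pre-fixpoints) whose two legs encode, for each $r$, the multiplication relation $\mu_r$ against the two-layer inclusion.

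Since general coequalizers are not yet available, I would build this one by hand. Forming the object-wise coequalizer in $[\A,\A]$ (here $\A$ is cocomplete) gives a quotient that need not be a monad, because the squaring functor $X\mapsto X\cdot X$, $f\mapsto f\ast f$ need not preserve it. The remedy is to iterate: transfinitely alternate "take the object-wise quotient" with "further merge whatever $q\ast q$ is forced to merge," producing a chain of componentwise-epic quotients of $\bkF_H$, each again a monad by the diagonal fill-in of Lemma \ref{\labelauthor:lem:L-fact} (its split, hence absolute, components let Proposition \ref{\labelauthor:pro:Pro2.5} lift them to $\text{Monad}(\A)$). Cowellpoweredness (Remark \ref{\labelauthor:rem:R-cow}) forces the chain to stabilize, and the stable value is the sought monad $M=\coprod_r\bkS_r$ with injections $\kappa_r:\bkS_r\to M$ factored through the $j_r$.

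Finally I would verify the universal property: for any monad $\bkT$, a morphism $M\to\bkT$ corresponds via the quotient $\bkF_H\to M$ and freeness to a family of natural transformations $S_r\to T$ whose induced morphisms $\bkF_{S_r}\to\bkT$ factor through the $\pi_r$, i.e.\ to a family of monad morphisms $\bkS_r\to\bkT$, and naturality in $\bkT$ is clear. I expect the main obstacle to be exactly the hand-made coequalizer of the third paragraph --- namely proving that the transfinite tower of quotients both converges and imposes \emph{precisely} the relations carried by the $\pi_r$ (neither too few, so that $M$ genuinely satisfies all the $\bkS_r$-laws, nor too many, so that the correspondence above is a bijection). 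The secondary point to keep honest is the joint pre-fixpoint hypothesis, which is what secures the existence of $\bkF_H$ (and of each $\bkF_{S_r}$) at the outset.
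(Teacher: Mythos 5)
Your opening paragraph coincides with the paper's: form $S=\coprod_{r}S_r$, check that it preserves monomorphisms and has arbitrarily large pre-fixpoints, and obtain the free monad $\bkF_S$ from Corollary \ref{\labelauthor:cor:C-free}. The gap lies in how you then quotient $\bkF_S$. Your transfinite tower of componentwise-epic quotients has two unresolved problems. First, convergence: cowellpoweredness of $\A$ (Remark \ref{\labelauthor:rem:R-cow}) only guarantees that the chain of quotients of each \emph{individual} object $F_SA$ stabilizes; the stabilization ordinal may depend on $A$, and since $\A$ has a proper class of objects there need be no single ordinal at which the chain of \emph{functors} becomes stationary. (This is precisely the obstruction that forces Kelly to assume accessibility for such transfinite constructions; note that the hypotheses of the theorem are explicitly weaker than accessibility.) Second, the monad structure along the tower: at limit ordinals you must form a chain colimit in $[\A,\A]$, and the squaring functor $X\mapsto X\cdot X$ does not preserve such colimits, so the colimit does not inherit a multiplication; at successor steps the appeal to Proposition \ref{\labelauthor:pro:Pro2.5} is misplaced, since a natural transformation with split epic components is in general neither split as a natural transformation nor an absolute coequalizer of the pair being coequalized. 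Lemma \ref{\labelauthor:lem:L-fact} cannot substitute for it either: its diagonal fill-in presupposes a monad morphism into an already existing target monad, which is exactly what you are trying to build. (A smaller point: your reflexive pair $\bkF_{\coprod_r S_rS_r}\rightrightarrows\bkF_S$ encodes only the multiplication relations; one must also check the unit relations are imposed.)

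The paper sidesteps the tower entirely by constructing the quotient from above rather than from below. For every cocone $f=(f_i:\bkS_i\to\bkT_f)$ it factors the induced monad morphism $\bar f:\bkF_S\to\bkT_f$ as a strong epimorphism $e_f$ followed by a monomorphism (Lemma \ref{\labelauthor:lem:L-fact}), uses cowellpoweredness of $\A$ to form, objectwise, the meet of all the $e_f$, and takes $\bkR$ to be the resulting product of the objects $(e_f,\bkR_f)$ in the slice category $\bkF_S/\textnormal{Monad}(\A)$, which exists because the forgetful functor creates limits (Proposition \ref{\labelauthor:pro:P-lim}). There cowellpoweredness is needed only to make a single meet of quotients of each $F_SA$ set-indexed; no uniformity over $A$ and no transfinite induction enter. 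To repair your argument you would either have to restrict to accessible monads or import this ``intersection of all cocone-induced quotients'' idea --- at which point the hand-made coequalizer becomes superfluous.
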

	
	\begin{proof}
		Let $\bkS_i = (S_i, \mu_i, \eta_i), i \in I$, be such a collection. Then the endofunctor $S = \coprod_{i \in I} S_i$ preserves monomorphisms. And it has arbitrarily large pre-fixpoints: given an object $A$ find $X$ with $S_i X \rightarrowtail X$ for all $i \in I$ and $X \simeq X+A \simeq \coprod_I X$ to get
		\begin{equation*}
			SX = \coprod_{i \in I} S_i X \rightarrowtail \coprod_I X \simeq X.
		\end{equation*}
		By Corollary \ref{\labelauthor:cor:C-free} the functor $S= \coprod_{i \in I} S_i$ generates a free monad $\bkF_S$ with the universal arrow $\hat{\eta}: S \to FS$; the coproduct injections are denoted by
		\begin{equation*}
			v_i: S_i \to S \qquad (i \in I)
		\end{equation*}
		 The forgetful functor Monad $(\A) \to [\A,\A]$ creates limits, see Proposition \ref{\labelauthor:pro:P-lim}, and we conclude that for the slice category $\bkF_S /$ Monad ($\A$) the the corresponding forgetful functor
		\begin{equation*}
			U : \bkF_S / \textnormal{ Monad }(\A) \to F_S / [A,A]
		\end{equation*}
		also creates limits. Now consider an arbitrarily cocone $f = (f_i)$ consisting of monad morphisms $f_i: \bkS_i \to \bkT_f \, (i \in I)$. The functor $[f_i]: S \to T_f$ generates uniquely a monad morphism $\bar{f}: \bkF_S \to \bkT_f$ with $\bar{f} \cdot \hat{\eta} = [f_i]$ that we factorize as in Lemma \ref{\labelauthor:lem:L-fact}
		\begin{equation*}
			\xymatrix{
				S \ar[rrd]^{[f_i]} \ar[d]_{\eta^S}
				\\
				\bkF_S \ar[rr]_{\bar{f}} \ar@{->>}[rd]_{e_f}
				&&
				\bkT_f
				&
				\\
				&
				\bkR_f \ar@{>->}[ru]_{m_f}
			}
		\end{equation*}
		We get a (possibly large) collection of objects $(e_f, R_f)$ of the slice category $\bkF_S /$ Monad ($\A$). This collection has a product in $F_S / [\A,\A]$. Indeed, recall from Remark \ref{\labelauthor:rem:R-cow} that $\A$ is cowellpowered, and for every object $A$ form the meet of $(e_f)_A: F_SA \to R_fA$ ranging through all cocones $f$. Let $e_A: F_SA \to RA$ be meet, thus for every cocone $f$ we have a morphism
		\begin{equation*}
			q^A_f: RA \to R_fA \textnormal{ with } (e_f)_A = q^A_f \cdot e_A.
		\end{equation*}
		The resulting functor $R$ and natural transformations $q_f: R \to R_f$ form a product of all $e_f$ in $F_S / [\A,\A]$. Consequently, there exists a product $(e, \bkR)$ of the objects $(e_f, \bkR_f)$ in $\bkF_S /$ Monad ($\A$) as $f$ ranges through all cocones: see Proposition \ref{\labelauthor:pro:P-lim}. For the projections $q_f: \bkR \to \bkR_f$ define
		\begin{equation*}
			p_f = m_f \cdot q_f : \bkR \to \bkT_f.
		\end{equation*}
		Then $\bar{f} = m_f \cdot e_f = m_f \cdot q_f \cdot e = p_f \cdot e$ implies
		\begin{equation*}
			f_i = f \cdot v_i = p_f \cdot e \cdot \hat{\eta} \cdot v_i:
		\end{equation*}
		\begin{equation} \label{\labelauthor:eqn:eqn31}
			\vcenter{
			\xymatrix{
				S_i \ar[d]_{v_i} \ar[rddd]^{f_i}
				\\
				S \ar[d]_{\hat{\eta}}
				\\
				\bkS \ar[rd]_{e_f} \ar[d]_{e}
				\\
				\bkR \ar[r]_{p_f} \ar@{<-} `l[u] `[uuu]^{u_i} [uuu]
				&
				\bkT_f 
			}
			}
		\end{equation}
		We claim that $\bkR$ is the coproduct of $\bkS_i \, (i \in I)$ in Monad ($\A$) with respect to
		\begin{equation*}
			u_i = e \cdot \hat{\eta} \cdot m_i : \bkS_i \to \bkR \qquad (i \in I).
		\end{equation*}
		\begin{enumerate}[(a)]
			\item Each $u_i$ is a monad morphism. This follows from the fact that $(p_f)$ is a collectively monic cone in $[\A, \A]$ and each $f_i$ is a monad morphism. Indeed, the condition $u_i \cdot \eta_i = \eta^R$ follows from
			\[
			\begin{array}{rll}
				p_f \cdot (u_i \cdot \eta_i) &= f_i \cdot \eta_i &\textnormal{ see (\ref{\labelauthor:eqn:eqn31})}\\
				&= \eta^{\bkR_f} & f_i \textnormal{ a monad morphism}\\
				&= p_f \cdot \eta^\bkR & p_f \textnormal{ a monad morphism.}
			\end{array}
			\]
		\end{enumerate}
		The verification of the condition 
		\begin{equation*}
			\mu_i \cdot u_i = \mu^R \cdot u_i \ast u_i = \mu^R \cdot Ru_i \cdot u_iS_i
		\end{equation*}
		follows from the following diagram
		\begin{equation*}
			\xymatrix{
				S_iS_i \ar[rrrr]^{\mu_i} \ar[rd]^{u_iS_i} \ar[dd]_{f_iS_i}
				&&&&
				S_i \ar[ldddd]^{f_i} \ar[ldd]_{u_i}
				\\
				&
				RS_i \ar[ld]^{p_fS_i} \ar[rd]^{Ru_i}
				\\
				T_fS_i \ar[rrd]^{T_fu_i} \ar[rrdd]_{T_ff_i}
				&&
				RR \ar[r]^{\mu^R} \ar[d]^{p_fR}
				&
				R \ar[dd]_{p_f}
				\\
				&&
				T_fR \ar[d]^{T_fp_f}
				\\
				&&
				T_fT_f \ar[r]_{\mu^{T_f}}
				&
				T_f
			}
		\end{equation*}
		All the inner parts but the upper one (to be proved commutative) commute: recall $f_i = p_f \cdot u_i$, use the fact that $p_f$ is a monad morphism for the lower square, and use the naturality of $p_f$ for $p_fR \cdot Ru_i = T_f u_i \cdot p_f S_i$. Since $f_i$ is a monad morphism, the outward square also commutes. This, together with the collective monicity of all $p_f$'s, proves that the upper square commutes.
		\item For every cocone $f= (f_i)_{i \in I}$ the monad morphism $p_f$ is the desired factorization: $f_i = p_f \cdot u_i$, see (\ref{\labelauthor:eqn:eqn31}). This is unique since whenever $r: \bkR \to \bkT_f$ is a monad morphism with $f_i = r \cdot u_i$ for all $i$, then $r \cdot e \cdot \hat{\eta} = f = p_f \cdot e \cdot \hat{\eta}$ which implies $r \cdot e = p_f \cdot e$ by the universal property of $\hat{\eta}$; hence $r = p_f$ since $e$ is epic.
	\end{proof}
	
	\begin{remark} \label{\labelauthor:rem:R-Kelly}
		\begin{enumerate}[(a)]
			\item Kelly described colimits of monads, see \cite{\labelauthor:bib:Ke}, Section 27 as follows:
		
		Let $D$ be a diagram in Monad $(\A)$ with objects $\mathds{T}_i = (T_i, \mu_i, \eta_i)$ for $i \in I$.
		
		Form the category $\C_D$ of all pairs $(A, (a_i)_{i \in I})$ where $A$ is an object of $\A$ and $a_i: T_iA \to A$ is an Eilenberg-Moore algebra for $\mathds{T}_i~(i \in I)$ such that for every connecting morphism $f: i \to j$ of the indexing category the triangle
		\begin{equation} \label{\labelauthor:eqn:eqn32}
			\vcenter{
			\xymatrix{
				T_iA \ar[r]^{a_i} \ar[rd]_{(Df)_A}
				&
				A
				\\
				&
				T_jA \ar[u]_{a_j}
			}}
		\end{equation}
		commutes. The morphisms of $\C_D$ are the morphisms of $\A$ which are algebra homomorphisms for every $\mathds{T}_i$. We have the obvious forgetful functor
		\begin{equation*}
			U_D: \C_D \to \A.
		\end{equation*}
		Kelly proved that if $U_D$ has a left adjoint, then the corresponding monad on $\A$ is a colimit of $D$ in Monad $(\A)$. The converse also holds if $\A$ is a complete category.	
		\end{enumerate}
	\end{remark}
	
	\begin{theorem} \label{\labelauthor:thm:thm3.9}
		Every diagram with a weakly terminal object has a colimit in Monad ($\A$). In particular, Monad ($\A$) has coequalizers.
	\end{theorem}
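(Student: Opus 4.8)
The plan is to invoke Kelly's description of colimits of monads (Remark \ref{\labelauthor:rem:R-Kelly}): it suffices to produce a left adjoint of the forgetful functor $U_D\colon \C_D \to \A$. The idea is that a weakly terminal object turns $\C_D$ into a full subcategory of a \emph{single} Eilenberg--Moore category, where a left adjoint can be built by reflection.

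So fix a weakly terminal vertex $t$ of the indexing category, put $\bkT = Dt$ with underlying functor $T$, and for each vertex $i$ choose a diagram morphism $g_i\colon i \to t$; write $Dg\colon \bkT_i \to \bkT$ for the monad morphism induced by a diagram morphism $g\colon i \to t$. First I would identify $\C_D$ with a full subcategory $\mcal{E}$ of $\A^{\bkT}$. Given an object $(A,(a_i))$ of $\C_D$, the compatibility condition (\ref{\labelauthor:eqn:eqn32}) forces $a_i = a_t \cdot (Dg_i)_A$, so the object is determined by the single $\bkT$-algebra $(A,a_t)$, and for parallel $g,g'\colon i \to t$ one must have $a_t \cdot (Dg)_A = a_t \cdot (Dg')_A$. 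Conversely, starting from a $\bkT$-algebra $(A,a_t)$ satisfying these equations, a short computation---using only that each $Dg$ is a monad morphism and that $a_t$ obeys the Eilenberg--Moore laws---shows that every $a_t \cdot (Dg)_A$ is a $\bkT_i$-algebra and that the resulting family is compatible, so it defines an object of $\C_D$. The identification is full because, by naturality of each $Dg$, a $\bkT$-homomorphism between two such algebras is automatically a $\bkT_i$-homomorphism. Thus $\mcal{E}$ is the full subcategory of $\A^{\bkT}$ cut out by the equations $a \cdot (Dg)_A = a \cdot (Dg')_A$ over all parallel pairs $g,g'\colon i \to t$, and $U_D$ becomes the composite of the inclusion $\mcal{E} \hookrightarrow \A^{\bkT}$ with the forgetful functor $\A^{\bkT} \to \A$.

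Next I would show that $\mcal{E}$ is reflective in $\A^{\bkT}$. Since $\A$ is complete and the forgetful functor of $\A^{\bkT}$ creates limits, $\A^{\bkT}$ is complete, and $\mcal{E}$, being defined by equations, is closed under limits. Moreover $\A^{\bkT}$ inherits the (strong epi, mono)-factorizations of $\A$ (Remark \ref{\labelauthor:rem:R-cow}): one factors the underlying morphism in $\A$ and lifts the algebra structure to the image by a diagonal fill-in, which is available because, by Assumption (c), the strong epimorphism splits and $T$ preserves split epimorphisms. A second short naturality computation shows that $\mcal{E}$ is closed under subalgebras in $\A^{\bkT}$. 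For reflectivity I would then invoke the adjoint functor theorem: $\mcal{E}$ is complete and its inclusion preserves limits, and a solution set is given, for each algebra $(A,a)$, by those quotient algebras of $(A,a)$ that lie in $\mcal{E}$---a \emph{set}, because $\A$ is cowellpowered and hence $A$ has only a set of quotients. Indeed any morphism from $(A,a)$ into $\mcal{E}$ factors through its image, which is a subalgebra of an object of $\mcal{E}$ and so lies in $\mcal{E}$.

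Composing the reflection $\A^{\bkT} \to \mcal{E} \cong \C_D$ with the free-algebra functor $\A \to \A^{\bkT}$ then yields a left adjoint of $U_D$, whence by Remark \ref{\labelauthor:rem:R-Kelly} the associated monad is a colimit of $D$. Coequalizers are the special case of a parallel pair $p,q\colon \bkS \to \bkT$, for which $t$ is the codomain and the only nontrivial equation is $a \cdot p_A = a \cdot q_A$. The main obstacle is the reduction itself: the two naturality computations---that $a_t \cdot (Dg)_A$ is automatically a $\bkT_i$-algebra, and that $\mcal{E}$ is closed under subalgebras---are exactly what confine the problem to the single category $\A^{\bkT}$ and thereby localize the required cowellpoweredness to one object at a time, rather than to all of Monad ($\A$). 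One must also verify carefully that $\A^{\bkT}$ genuinely inherits the factorization system from $\A$, which is where Assumption (c) enters.
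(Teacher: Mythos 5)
Your proposal is correct and follows essentially the same route as the paper: both reduce $\C_D$ to a full subcategory of the single Eilenberg--Moore category $\A^{\bkT}$ cut out by the equations $a\cdot (Dg)_A = a\cdot (Dg')_A$, verify closure under products and subalgebras, and obtain the reflection from completeness, wellpoweredness and the cowellpoweredness supplied by split epimorphisms. The only cosmetic difference is that you run the General Adjoint Functor Theorem with an explicit solution set of quotients, whereas the paper cites the reflectivity criterion of \cite{\labelauthor:bib:ahs}, 16.9, which encapsulates the same argument.
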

	
	\begin{proof}
		Let $D: \D \to$ Monad $(\A)$ be a diagram with objects $\mathds{T}_i = (T_i, \mu_i, \eta_i)$ for $i \in I$, and let $\mathds{T}_j$ be weakly terminal, i.e., for every $i \in I$ there exists a connecting morphism $f: \mathds{T}_i \to \mathds{T}_j$ in $D$. 
		\begin{enumerate}[(a)]
			\item Form the full subcategory $\C$ of $\A^{\mathds{T}_j}$ of all algebras $a: T_j A \to A$ for $\bkT_j$ such that for every pair $f,g:\mathds{T}_i \to \mathds{T}_j $ of connecting morphisms of $D~(i \in I)$ we have
			\begin{equation} \label{\labelauthor:eqn:Equal}
				a \cdot f_A = a \cdot g_A
			\end{equation}
			This category is closed in $\A^{\mathds{T}_j}$ under products, which easily follows from the forgetful functor $U^{\mathds{T}_j}$ creating limits. It is also closed under subalgebras.
			More precisely, let $m: (A,a) \to (B,b)$ be a homomorphism in $\A^{\mathds{T}_j}$ with $m$ monic in $\A$. If $(B, b)$ lies in $\C$, then so does $(A, a)$:
			\begin{equation*}
				\xymatrix@+30pt{
					T_iA \ar@/^/[r]^{f_A} \ar@/_/[r]_{g_A} \ar[d]_{T_im}
					&
					T_jA \ar[r]^a \ar[d]^{T_jm}
					&
					A \ar[d]^m
					\\
					T_iB \ar@/^/[r]^{f_B} \ar@/_/[r]_{g_B}
					&
					T_jB \ar[r]_b
					&
					B
				}
			\end{equation*}

			Since the forgetful functor $U^{\mathds{T}_j}$ creates limits, the category $\A^{\mathds{T}_j}$ is complete and wellpowered. Let us prove that it is also cowellpowered. Given a factorization of a homomorphism $h:(A,a) \to (B,b)$ in $A^{\mathds{T}_j}$ as a strong epimorphism $e:C \to B$ followed by a monomorphism $m: C \to B$ in $\A$, the diagonal fill-in makes $e$ and $m$ homomorphisms:
		\begin{equation*}
			\xymatrix{
				T_jA \ar[r]^a \ar@{->>}[d]_{T_je}
				&
				A \ar@{->>}[d]^e
				\\
				T_jC \ar@{-->}[r]^{c} \ar[d]_{T_jm}
				&
				C \ar@{>->}[d]^m
				\\
				T_jB \ar[r]_b
				&
				B
			}
		\end{equation*}	
		Thus, if $h$ is a strong epimorphism in $A^{\mathds{T}_j}$ then $m$ is an isomorphism (recall that $U^{\mathds{T}_j}$ creates limits, thus, reflects isomorphisms), consequently, $h$ is an epimorphism in $\A$. Since $\A$ is cowellpowered (see Remark \ref{\labelauthor:rem:R-cow}) we conclude that $A^{\mathds{T}_j}$ is cowellpowered. 
		\item Every full subcategory $of \A^{\bkT_j}$ closed under products and subobjects is reflective, see \cite{\labelauthor:bib:ahs}, 16.9. Thus, the obvious forgetful functor $U: \C \to \A$ has a left adjoint.
		
		The theorem now follows from Remark \ref{\labelauthor:rem:R-Kelly} and the fact that there exists an isomorphism $E$ of categories such that the triangle
		\begin{equation*}
			\xymatrix{
				\C_D \ar[rr]^E \ar[rd]_{U_D}
				&&
				\C \ar[ld]^{U}
				\\
				&
				\A		
			}
		\end{equation*}
		commutes. Indeed, $E$ is the "projection to $j$"
		\begin{equation*}
			E(A, (d_i)_{i \in I}) = (A, d_j).
		\end{equation*}
		From the triangles (\ref{\labelauthor:eqn:eqn32}) we deduce that $(A, d_j)$ satisfies (\ref{\labelauthor:eqn:Equal}). Thus, $E$ is a well-defined, faithful functor. It is surjective on objects: for every algebra $(A, a)$ in $\C$ define, given $i \in I$,
		\begin{equation*}
			a_i = a \cdot f_A:T_iA \to A \text{ for any connecting morphism }f:\mathds{T}_i \to \mathds{T}_j.
		\end{equation*}
		Then $a_i$ is well-defined due to (\ref{\labelauthor:eqn:Equal}) and, since $f$ is a monad morphism, $(A, a_i)$ is an Eilenberg-Moore algebra for $\mathds{T}_i$.		
		Finally, to prove that $E$ is an isomorphism, we verify that it is full. Let
		\begin{equation*}
			k:(A,a) \to (B,b)
		\end{equation*}
		be a homomorphism in $\C$. Then we need to prove that for every $i \in I$ this is a homomorphism from $(A, a_i)$ to $(B, b_i)$, where again $b_i = b \cdot f_B$. Use the following diagram
		\begin{equation*}
			\xymatrix@+30pt{
				T_iA \ar[r]^{f_A} \ar[d]_{T_i k}
				&
				T_jA \ar[r]^{a} \ar[d]_{T_jk}
				&
				A \ar[d]^k \ar@{<-} `u[l] `[ll]_{a_i} [ll]
				\\
				T_iB \ar[r]^{f_B}
				&
				T_jB \ar[r]^b
				&
				B \ar@{<-} `d[l] `[ll]^{b_i} [ll]
			}
		\end{equation*}
		\end{enumerate}
	\end{proof}

	\begin{corollary}
		Every diagram of monos-preserving monads with arbitrarily large joint pre-fixpoints has a colimit in Monad ($\A$).
	\end{corollary}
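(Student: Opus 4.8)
The plan is to present the colimit through the standard description of an arbitrary colimit as a coequalizer of a canonical pair of morphisms between two coproducts, being careful to invoke only those coproducts that the preceding results guarantee.

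Write the diagram as $D:\D\to$ Monad $(\A)$ with objects $\bkT_i=(T_i,\mu_i,\eta_i)$ for $i\in\obj\D$, and write $Du:\bkT_i\to\bkT_j$ for the image of a morphism $u:i\to j$ of $\D$. First I would form the two coproducts
\[
	P=\coprod_{i\in\obj\D}\bkT_i
	\qquad\text{and}\qquad
	Q=\coprod_{u\in\mathrm{mor}\,\D}\bkT_{\mathrm{dom}\,u},
\]
with coproduct injections $\sigma_i:\bkT_i\to P$ and $\tau_u:\bkT_{\mathrm{dom}\,u}\to Q$. Both exist by Theorem \ref{\labelauthor:thm:T-33}: each is a (small, since $\D$ is small) collection of monos-preserving monads, and each has arbitrarily large joint pre-fixpoints. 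The last point is the only thing requiring a word: the condition $X+A\simeq X\simeq\coprod_\alpha X$ of Definition \ref{\labelauthor:def:D-34} is a property of $X$ alone, and being a joint pre-fixpoint of a subcollection is weaker than being one of the full family $\{\bkT_i\}_{i\in\obj\D}$; hence every joint pre-fixpoint of the whole diagram serves both index families (the morphism-indexed one being merely a sub-multiset of $\{\bkT_i\}$).

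Next I would define the parallel pair $s,t:Q\to P$ in Monad $(\A)$ by the universal property of $Q$, setting
\[
	s\cdot\tau_u=\sigma_{\mathrm{dom}\,u}
	\qquad\text{and}\qquad
	t\cdot\tau_u=\sigma_{\mathrm{cod}\,u}\cdot Du
	\qquad(u\in\mathrm{mor}\,\D),
\]
and then take their coequalizer $c:P\to C$, which exists by Theorem \ref{\labelauthor:thm:thm3.9} (no hypotheses beyond the standing Assumptions are needed there, since coequalizers are produced unconditionally). I claim that $C$ together with the cocone $\gamma_i=c\cdot\sigma_i:\bkT_i\to C$ is a colimit of $D$. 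Reading the equation $c\cdot s=c\cdot t$ off one injection $\tau_u$ at a time gives exactly $\gamma_i=\gamma_j\cdot Du$ for every $u:i\to j$, so $(\gamma_i)$ is a cocone on $D$. For universality, any cocone $(\rho_i:\bkT_i\to\bkR)$ induces a unique $\rho:P\to\bkR$ with $\rho\cdot\sigma_i=\rho_i$; the cocone identities force $\rho\cdot s=\rho\cdot t$, whence $\rho$ factors uniquely through $c$, and this factorization is the required unique morphism $C\to\bkR$ compatible with the $\gamma_i$.

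The content of the argument lies entirely in the two cited theorems; the corollary is their routine assembly via the coproduct--coequalizer presentation of colimits, carried out inside Monad $(\A)$ rather than in $\A$. Accordingly I do not anticipate a real obstacle. The one point deserving explicit mention --- and the only place where the hypotheses of the corollary are consumed --- is that "\,monos-preserving with arbitrarily large joint pre-fixpoints\," is inherited by the two index families above, so that Theorem \ref{\labelauthor:thm:T-33} may legitimately be applied to each; the coequalizer, by contrast, comes free of charge from Theorem \ref{\labelauthor:thm:thm3.9}.
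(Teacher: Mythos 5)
Your proposal is correct and takes essentially the same route as the paper: the standard presentation of the colimit as a coequalizer of a parallel pair between two coproducts, with the coproducts supplied by Theorem \ref{\labelauthor:thm:T-33} and the coequalizer by Theorem \ref{\labelauthor:thm:thm3.9}, the only substantive point being that the morphism-indexed family inherits ``monos-preserving with arbitrarily large joint pre-fixpoints.'' Your justification of that inheritance (joint pre-fixpointness and the condition $X+A\simeq X\simeq\coprod_\alpha X$ depend only on the underlying set of functors, not on multiplicities, and the definition quantifies over all $\alpha$) is valid and matches the paper's, which phrases the same observation by passing from $\alpha$ to $\alpha+\card\,J$ --- though note the repeated family is a super-multiset of $\{\bkT_i\}$, not a sub-multiset as you wrote.
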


	Indeed, apply the usual construction of colimits as coequalizers of a parallel pair between coproducts; see \cite{\labelauthor:bib:ML}. Given a diagram $\D$ in Monad ($\A$) with monos-preserving objects $\bkS_i = (S_i, \mu_i, \eta_i)$ for $i \in I$ having arbitrarily large joint pre-fixpoints, then also every collection of monads indexed by $I \times J$, where $J$ is an arbitrarily set and $\bkS_i = \bkS_{(i,j)}$ for all $(i,j) \in I \times J$, has arbitrarily large joint pre-fixpoint. (Indeed, for every object $A$ and every cardinal $\alpha$ put $\alpha' = \alpha + \card J$. By applying Definition \ref{\labelauthor:def:D-34} to $A$ and $\alpha'$ for the former collection indexed by $I$, we get the required condition for the new collection.) Thus, those two coproducts needed to construct $\colim \D$ as a coequalizer in Monad ($\A$) exist.

	\begin{remark} \label{\labelauthor:def:rmk-311}
		Monad ($\A$) also has cointersections. That is, wide pushouts of strong epimorphisms $e_i: \bkT \to \bkS_i \, (i \in I)$. The proof is analogous to that of Theorem \ref{\labelauthor:thm:thm3.9}. Let $\C$ be the full subcategory of $\A^\bkT$ on all algebras $a:TA \to A$ for which $a$ factorized though each $(e_i)_A: TA \to S_i A$ factorizes though $a$. This subcategory is easily seen to be closed under products and subalgebras. And it is isomorphic to the category $\C_\D$ of Remark \ref{\labelauthor:rem:R-Kelly}. (Here we use the fact established in Lemma \ref{\labelauthor:lem:L-fact} that strong epimorphisms in Monad ($\A$) have epic components.) Thus, the cointersection of $c_i$ exists in Monad ($\A$).
	\end{remark}
	
	\begin{example}
		For the base category of graphs
		\begin{equation*}
			\Gra = \Set^{\rightrightarrows}
		\end{equation*}
		we present a parallel pair of monad morphisms having no coequalizer in Monad $(\Gra)$.
		
		For every graph $X=(V,E,s,t)$ with source and target maps $s,t: E \to V$ we denote by $X_e$ the set of all loops, i.\,e., the equalizer of $s$ and $t$. We construct two endofunctors $H,K: \Gra \to \Gra$ and two natural transformations $\sigma, \tau$: $H \to K$ such that for the coequalizer
		\begin{equation*}
			\xymatrix@1{			
			 	H \ar@/^/[r]^\sigma \ar@/_/[r]_\tau
			 	&
			 	K \ar[r]^{\rho}
			 	&
			 	L 
			 }
			 \quad \textrm{ in } [\Gra,\Gra]
		\end{equation*}
		$L$ does not generate a free monad, but $H$ and $K$ do. It follows immedeately that the monad morphisms
		\begin{equation*}
			\bar{\sigma}, \bar{\tau}: \mathds{F}_H \to \mathds{F}_K
		\end{equation*}
		corresponding to $\sigma$ and $\tau$ do not have a coequalizer in Monad $(\Gra)$: if $\mathds{S}$ were the codomain of such a coequalizer, then since $\bkF_{(-)}$ is a left adjoint, $\mathds{S}$ would clearly be a free monad on $L$.
		
		Let $\P$ denote the power-set functor. The endofunctor $H$ is defined on objects $X$ as follows:
		\begin{equation*}
			H(X) \textrm{ has vertices } \P(X_e) \textrm{ and no edges.}
		\end{equation*}
		The definition of $H$ on morphisms $g:X \to X'$ is as expected: $H(g)$ is the domain-codomain restriction of the edge function of $g$ to all loops. Analogously define $K$:
		$$
			K(X) \textnormal{ has } \textnormal{vertices} \P(X_e) + \P(X_e) \textnormal{ and edges } \P(X_e)
		$$
		$$
			s,t:\P(X_e) \to \P(X_e) + \P(X_e) \textrm{ are the coproduct injections}
		$$
		That is, $K(X)$ is the disjoint union of arrows indexed by $\P(X_e)$. The definition on morphisms is again as expected. Let
		\begin{equation*}
			\sigma, \tau: H \to K
		\end{equation*}
		be the natural transformations corresponding to $s$ and $t$: for every $M \subseteq X_e$, $\sigma_X(M)$ is the source of the arrow labelled by $M$ and $\tau_X(M)$ is its target.
		The coequalizer $L$ of $\sigma$ and $\tau$ in $[\Gra,\Gra]$ is obvious: it assigns to every graph $X$ the graph on $\P(X_e)$ consisting of loops:
		\begin{equation*}
			L(X) \textrm{ has vertices = edges = } \P(X_e) \textrm{ and } s= t
		\end{equation*}
		The functor $H$ generates a free monad, since in Construction \ref{\labelauthor:con:C-W} we have
		\begin{equation*}
			W_2 = X + H(X+HX) = H + HX = W_1.
		\end{equation*}
		Thus the construction converges in one step. The same is true about $K$.
		
		It remains to prove that $L$ does not generate a free monad. By Theorem \ref{\labelauthor:thm:T-B} it is sufficient to prove that $L$ does not have an initial algebra. Indeed, we prove that if
		\begin{equation*}
			a: LA \to A
		\end{equation*}
		is an initial algebra, then $\P$ has an initial algebra (compare Example \ref{\labelauthor:ex:E-P}). Let $m:A_0 \to A$ be the subgraph of $A$ whose vertices are precisely the loops of $A$ and whose edges are just all the loops. Then $LA = LA_0$ , and we obviously have a codomain restriction $a_0: LA_0 \to A_0$ of $a$. And $m: (A_0, a_0) \to (A,a)$ is a homomorphism of algebras for $L$. The unique homomorphism $h:(A,a) \to (A_0,a_0)$ thus yields an endomorphism $m \cdot h$ of the initial algebra; hence $m \cdot h = \id$. This proves $A = A_0$. That is, $A$ is the set $A_v$ of vertices endowed with all loops. But then $a: \P A_v \to A_v$ as an algebra for $\P$  is initial: given any algebra $b: \P B \to B$, form the graph $\bar{B}$ of all loops in $B$ and obtain an obvious structure $\bar{b}: L \bar{B} \to \bar{B}$ of an $L$-algebra. Then $\P$-algebra homomorphisms from $(A,a)$ to $(\bar{B}, \bar{b})$ are precisely the $L$-algebra homomorphisms from $(A_v,a)$ to $\bar{B}$. This is the desired contradiction.
	\end{example}

	\begin{example}
		The category Monad ($\Gra$) also fails to have cointersections of split epimorphisms. The argument is completely analogous: the following split epimorphisms
		$$
			\sigma_0 = [\sigma, \sigma,\tau, \id]: H+H+H+K \to K
		$$
		and
		$$
			\tau_0 = [\tau,\sigma,\tau, \id]: H+H+H+K \to K
		$$
		of $[\Gra,\Gra]$ have the cointersection as follows:
		\begin{equation*}
			\xymatrix{
				&
				\hat{H} = H+H+H+K \ar[ld]_{\sigma_0} \ar[rd]^{\tau_0}
				\\
				K \ar[rd]_{\rho}
				&&
				K \ar[ld]^{\rho}
				\\
				&
				L	
			}
		\end{equation*}
		Since $L$ does not generate a free monad, the split epimorphisms $\bar{\sigma}_0, \bar{\tau}_0: \mathds{F}_{\hat{H}} \to \mathds{F}_K$ do not have a cointersection in Monad ($\Gra$).
	\end{example}		

\section{Coproducts of Separated Monads}
	Ghani and Ustalu presented in \cite{\labelauthor:bib:gu} an interesting formula for coproducts of ideal monads, see Example \ref{\labelauthor:ex:example-42}(4), which was, in case of monads over $\Set$, generalized in \cite{\labelauthor:bib:ablm}. The present section is based on the ideas of the latter paper, extending the formula to separated monads over abstract categories. Separatedness means that the monad unit has a complement -- not over the given category $\A$ but over the category $\A_m$ of all objects and all monomorphisms.
	
	\begin{assumption}
		Thoughout this section $\A$ denotes a cocomplete category in which a coproduct of parallel monomorphisms is always monic.
		
		We denote by 
		\begin{equation*}
			\A_m
		\end{equation*}				
		the category of all objects and all monomorphisms of $\A$.
		
		Every monos-preserving endofunctor $F$ of $\A$ defines an endofunctor of $\A_m$ by restriction, we denote it by $F$ again.
		
		The coproduct $+$ of $\A$ is a monoid structure on $\A_m$ (not having the universal property of coproducts, of course).
	\end{assumption}
	
	\begin{example} \label{\labelauthor:ex:E-Set}
		\begin{enumerate}[(1)]
			\item The \emph{exception monad} $\bkM_E$ defined by $X \mapsto X+E$ has coproduct with all monads $\bkS$: the coproduct is given by $X \mapsto S(X+E)$.
			\item The terminal monad $\mathds{1}$ given by $X \mapsto 1$, also has all coproducts, the result is always $\mathds{1}$.
			\item For monads over $\Set$ there are essentially no other monads having a coproduct with every monad. More precisely, let $\bkM^0_E$ be the modification of $\bkM_E$ with $\emptyset \mapsto \emptyset$ and $X \mapsto E$ for all $X \neq \emptyset$. Analogously, let $\mathds{1}^0$ be given by $\emptyset \mapsto \emptyset$ and $X \mapsto 1$ for all $X \neq \emptyset$. It is easy to see that Monad ($\Set$) has all coproducts with $\bkM^0_E$ or with $\mathds{1}^0$.
		\end{enumerate}
	\end{example}
	
	\begin{definition}
		We call a monad over $\Set$ \emph{trivial} if it is isomorphic to $\mathds{M}_E$, $\mathds{M}^0_E$, $\mathds{1}$, or  $\mathds{1}^0$. These are precisely the monads corresponding to varieties of alegebras with no operation of arity at least 1.
	\end{definition}		
	
	\begin{theorem}[See \cite{\labelauthor:bib:ablm}] \label{\labelauthor:thm:T-44}
		A monad over $\Set$ has coproducts with all monads iff it is trivial.
		
		Moreover, all monads over $\Set$ except $\mathds{1}$ and $\mathds{1}^0$ are \emph{consistent}, i.\,e., the components of the monad unit are monic.
	\end{theorem}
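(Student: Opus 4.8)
I would prove the two assertions separately, beginning with the ``moreover'' clause about consistency, since it shows that every monad relevant to the first assertion is consistent. For the consistency claim I argue the contrapositive: if the unit $\eta$ of a monad $\bkS=(S,\eta,\mu)$ fails to be monic somewhere, then $\bkS\cong\mathds{1}$ or $\bkS\cong\mathds{1}^0$. Suppose $\eta_X(a)=\eta_X(b)$ with $a\neq b$; choosing $f\colon X\to 2$ into the two-element set $2=\{0,1\}$ with $f(a)=0$, $f(b)=1$ and using naturality of $\eta$, one gets $\eta_2(0)=\eta_2(1)$, so the failure already occurs on $2$. Then for every nonempty $Y$ and all $y,y'\in Y$, precomposing with the map $2\to Y$ sending $0\mapsto y$ and $1\mapsto y'$ shows $\eta_Y(y)=\eta_Y(y')$, i.e.\ $\eta_Y$ is constant. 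Since $SY\neq\emptyset$ (it contains the image of $\eta_Y$), the transformation $\eta_{SY}\colon SY\to SSY$ is constant as well, and the unit law $\mu_Y\cdot\eta_{SY}=\id_{SY}$ forces $\id_{SY}$ to have one-element image; hence $|SY|=1$ for every nonempty $Y$. The same unit law at $Y=\emptyset$ leaves only $S\emptyset\in\{\emptyset,1\}$, yielding $\mathds{1}^0$ or $\mathds{1}$. Dualising the statement gives: every monad not isomorphic to $\mathds{1}$ or $\mathds{1}^0$ is consistent.

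For the implication \emph{trivial $\Rightarrow$ coproduct with all}, I would simply unfold Example \ref{\labelauthor:ex:E-Set}: for $\mathds{M}_E$ the coproduct with any $\bkT$ is $X\mapsto T(X+E)$, the free $\bkT$-algebra on the generators $X$ together with the constants $E$; for $\mathds{1}$ it is again $\mathds{1}$, since the only $\mathds{1}$-algebra is a one-element set, which carries a unique $\bkT$-structure; and the two pointed variants $\mathds{M}^0_E$ and $\mathds{1}^0$ are handled identically. In each case one exhibits the left adjoint of the forgetful functor of the combined algebra category and reads off the monad.

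For the reverse implication \emph{coproduct with all $\Rightarrow$ trivial} I argue by contraposition: given a non-trivial $\bkS$, I produce one monad with which $\bkS$ has no coproduct. By the consistency clause, such an $\bkS$ is in particular consistent (it is not isomorphic to $\mathds{1}$ or $\mathds{1}^0$), and being non-trivial it carries a genuine operation of arity $\geq 1$. The adversary I choose is the power-set monad $\P$ (unit $x\mapsto\{x\}$, multiplication given by union), whose algebras are the complete join-semilattices; it is itself non-trivial, since join is a binary operation, and, crucially, $\P$ has \emph{no} fixpoint at all, because $|\P Y|=2^{|Y|}>|Y|$ for every set $Y$. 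I then invoke the necessity half of the fixpoint criterion for $\Set$ recalled in the Introduction: a collection of non-trivial monads can have a coproduct only if it admits arbitrarily large joint fixpoints. Since $\P$ has no fixpoints whatsoever, the pair $\bkS,\P$ shares none, so $\bkS+\P$ cannot exist; hence no non-trivial monad has coproducts with all monads.

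The genuinely hard ingredient is the necessity half invoked in the last step: that existence of the coproduct of two non-trivial monads forces arbitrarily large joint fixpoints. When it exists, the coproduct is the monad of ``alternating'' $\bkS$- and $\bkT$-terms, built by the transfinite construction underlying Remark \ref{\labelauthor:rem:R-Kelly}; one must show that convergence of this construction for \emph{every} object lets one extract joint fixpoints above any prescribed cardinal, and conversely that the alternating construction already on a single generator diverges once such fixpoints are absent. Making that divergence precise — ruling out some unexpected collapse that would keep the alternating terms bounded — is where the real work lies, and it is exactly why $\P$ is a convenient adversary: its strict cardinality growth makes every putative joint fixpoint impossible outright, so the obstruction applies uniformly to all non-trivial $\bkS$.
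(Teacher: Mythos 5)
The paper does not actually prove this theorem --- it is imported verbatim from \cite{\labelauthor:bib:ablm} --- so there is no in-paper argument to compare against; judged on its own terms, your proposal follows the route that \cite{\labelauthor:bib:ablm} itself takes and is essentially sound. The consistency clause is the one part you prove in full, and that argument is correct: pushing a failure of injectivity of $\eta$ through naturality to the two-element set and back shows that $\eta_Y$ is constant on every nonempty $Y$, and the unit law $\mu_Y \cdot \eta_{SY} = \id_{SY}$ then collapses $SY$ to a point, leaving only $\mathds{1}$ and $\mathds{1}^0$ (call this step the contrapositive rather than the ``dual''). The easy direction, that trivial monads admit all coproducts, is correctly read off from Example \ref{\labelauthor:ex:E-Set} by exhibiting the free multi-algebras.

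The hard direction is where your proof is not self-contained: you reduce ``non-trivial $\Rightarrow$ some coproduct fails'' to the necessity half of the pre-fixpoint criterion (the statement recalled in the Introduction, equivalently Theorem \ref{\labelauthor:thm:T-411}), instantiated with the power-set monad $\P$ as adversary. The instantiation itself is correct: $\P$ is consistent and non-trivial, it has no (pre-)fixpoints by Cantor's theorem, and in the two-element collection $\{\bkS,\P\}$ with $\bkS$ non-trivial no member is trivial, so the criterion rules out $\bkS+\P$. But that criterion is precisely the deep content of \cite{\labelauthor:bib:ablm}, and you openly leave it as a black box. Since the present paper likewise only cites it, your argument is complete relative to the paper's stated results; a genuinely self-contained proof would in addition need the lemma that a non-trivial consistent monad satisfies $|\bar{S}W| \geq |W|$ (its operation of arity at least one embeds $W$ into the unit complement), which is what forces the alternating chain for $\bkS$ and $\P$ to diverge and is proved only in \cite{\labelauthor:bib:ablm}.
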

	
	\subsection{The category of multi-algebras}
		Given a discrete diagram $\D$ of monads $\bkT_i\,(i \in I)$ the category $\C_D$ of Remark \ref{\labelauthor:rem:R-Kelly} has as objects \emph{multi-algebras}
		\begin{equation*}
			(A,(a_i)_{i\in I}) \textnormal{ where } a_i: T_i A \to A \textnormal{ lies in }\A^{\bkT_i}
		\end{equation*}
		and morphisms are those maps in $\A$ that are homomorphisms for each of $\bkT_i$ simultaneously. A coproduct of the monads $\bkT_i$ exists in Monad ($\A$) whenever every object of $\A$ generates a free multi-algebra.
	
	
	\begin{definition}
		A monad $(S, \mu, \eta)$ is called \emph{separated} if its unit has a complement in the following sense:
		\begin{enumerate}
			\item[(i)] $S$ preserves monomorphisms
		\end{enumerate}
		and
		\begin{enumerate}
			\item[(ii)] there exists an endofunctor $\bar{S}$ of $\A_m$ such that
			\begin{equation*}
				S = \Id + \bar{S} 
			\end{equation*}
			with the unit $\eta$ as the left-hand injection.
		\end{enumerate}
	\end{definition}
	
	\begin{examples} \label{\labelauthor:ex:example-42}
		\begin{enumerate}[(1)]
			\item The exception monad $\bkM_E$ is separated: here $\bar{M}_E$ is the constant functor of value $E$.
			\item Every free monad $\bkF_H$ which preserves monomorphisms is separated. (In particular, if $\A$ has stable monomorphisms, all free monads on monos-preserving functor are separated.) Here $\bar{\bkF}_H = H \cdot \bkF_H$: use Remarks \ref{\labelauthor:rem:r-mono} and \ref{\labelauthor:rem:r-plus}.
			\item All consistent monads on $\Set$ (i.\,e., all except $\mathds{1}$ and $\mathds{1}^0$) are separated. See \cite{\labelauthor:bib:ablm}, Proposition IV.5.
			\item \emph{Ideal monads} of Elgot \cite{\labelauthor:bib:e} are separated if they preserve monomorphisms. Recall that an ideal monad $\bkS = (S, \mu, \eta)$ is one for which an endofunctor $\bar{S}$ of $\A$ exists such that (i) $S = \Id +\bar{S}$ in $[\A,\A]$ with the left-hand injection $\eta$ and (ii) $\mu$ restricts to a natural transformation $\bar{\mu}: \bar{S}S \to \bar{S}$.
			\item In particular, the free completely iterative monad $\bkS$ on an endofunctor $H$ given by the greatest fixpoint
			\begin{equation*}
				SA = \nu X \cdot (A + HX)
			\end{equation*}
			is separated, with $\bar{S} = H \cdot S$, whenever it preserves monomorphisms, see \cite{\labelauthor:bib:AAMV}.
		\end{enumerate}
	\end{examples}
	
	\begin{notation}
		Let $\bkS_i\,(i \in I)$ be separated monads. For every object $A$ of $\A$ define an endofunctor $H_A$ of $\A^I_m$ as follows:
		\begin{equation*}
			H_A(X_i)_{i \in I} = (\bar{S}_i Y_i)_{i \in I} \textnormal{ where } Y_i = A+ \coprod_{j \in I, j \neq i} X_j
		\end{equation*}
		If $H_A$ has an initial algebra, we denote its components by $S^\ast_i A$:
		\begin{equation*}
			\mu H_A = (S^\ast_i A)_{i \in I}
		\end{equation*}
	\end{notation}
	
	\begin{remark}
		Let $(X_i)$ be a fixed point of $H_A$:
		\begin{equation*}
			X_i \simeq \bar{S}_i Y_i \textnormal{ for all } i \in I
		\end{equation*}
		Then the coproduct $A+ \coprod_{i \in I} X_i$ carries a canonical structure of a multi-algebra: the algebra structure for $\bkS_i$ is the free algebra on $Y_i$. Indeed, the usual free algebra is ($S_iY_i, \mu^i_{Y_i})$. And the above coproduct is isomorphic to $S_iY_i:$.
		\begin{align*}
			A + \coprod_{i \in I} X_i &\cong Y_i + X_i\\
			&\simeq Y_i + \bar{S}_i Y_i\\
			&= S_iY_i
		\end{align*}
		In particular: if the initial algebra $\mu H_A = (S^\ast_i)_{i \in I}$ exists, then the coproduct
		\begin{equation*}
			A + \coprod_{i \in I} S_i^\ast A
		\end{equation*}
		is a multi-algebra. We prove that it is free on $A$ w.r.t. the right-hand coproduct injection $\inl : A \to A + \coprod_{i \in I} S^\ast_i A$:
	\end{remark}
	
	\begin{theorem}
		A coproduct of separated monads $\bkS_i \, (i \in I)$ exists whenever the initial algebra $\mu H_A = (S^\ast_i A)$ exists for every object $A$. It is defined by
		\begin{equation*}
			A \mapsto A + \coprod_{i \in I} S^\ast_i A
		\end{equation*}
	\end{theorem}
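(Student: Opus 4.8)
The plan is to reduce the statement to the construction of free multi-algebras and then invoke Kelly's description. By the discussion of the category of multi-algebras and by Remark \ref{\labelauthor:rem:R-Kelly} applied to the discrete diagram of the $\bkS_i$, a coproduct exists in Monad $(\A)$ as soon as the forgetful functor $U_D: \C_D \to \A$ has a left adjoint, i.e. as soon as every object $A$ generates a free multi-algebra; moreover the value of that coproduct at $A$ is the underlying object of the free multi-algebra on $A$. So it suffices to show that the multi-algebra $F_A := A + \coprod_{i \in I} S^\ast_i A$ of the preceding Remark --- with its canonical structure in which the $\bkS_i$-algebra is free on $Y_i = A + \coprod_{j \neq i} S^\ast_j A$ --- is free on $A$ with respect to $\inl: A \to F_A$.

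First I would pin down this universal property as a fixed-point condition. Given any multi-algebra $(B,(b_i))$ and any $g: A \to B$, a morphism $h: F_A \to B$ with $h \cdot \inl = g$ is the same as a family $c_i: S^\ast_i A \to B$ with $h = [g,(c_i)_i]$. Since the $i$-th structure of $F_A$ is the free $\bkS_i$-algebra on $Y_i$ (universal arrow the coproduct injection $Y_i \to F_A$), the map $h$ is a homomorphism for $\bkS_i$ iff it is the free extension of its restriction $[g,(c_j)_{j\neq i}]: Y_i \to B$; comparing the $S^\ast_i A$-components through the Lambek isomorphism $\xi_i: \bar{S}_i Y_i \xrightarrow{\sim} S^\ast_i A$ of $\mu H_A$, this says exactly
\begin{equation*}
	c_i \cdot \xi_i = b_i \cdot S_i[g,(c_j)_{j\neq i}] \cdot \iota_i \qquad (i \in I),
\end{equation*}
where $\iota_i: \bar{S}_i Y_i \to S_i Y_i$ is the coproduct injection. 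Thus $h$ is a multi-homomorphism extending $g$ iff $(c_i)$ solves this simultaneous system, and it remains to prove that the system has a unique solution.

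For existence and uniqueness I would solve the system along the initial-algebra chain of $H_A$, whose colimit is $\mu H_A = (S^\ast_i A)$. Writing $W^{(n)} = (W^{(n)}_i)$ for the chain ($W^{(0)} = 0$, $W^{(n+1)} = H_A W^{(n)}$, colimits at limit ordinals), I define partial families $c^{(n)}_i: W^{(n)}_i \to B$ by transfinite recursion: the successor step sets $c^{(n+1)}_i = b_i \cdot S_i[g,(c^{(n)}_j)_{j\neq i}] \cdot \iota^{(n)}_i$ and limit steps are colimits. An induction shows these are compatible with the connecting maps, so they assemble into $c_i = \colim_n c^{(n)}_i: S^\ast_i A \to B$; passing to the colimit at the convergence ordinal --- where $H_A$, hence each $\bar{S}_i$, preserves the colimit, so that $\xi_i$ is the colimit comparison --- yields the displayed equation, giving existence. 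Uniqueness follows by the same induction: any solution restricts along the chain to the $c^{(n)}_i$ and so equals $c_i$. Finally, $A \mapsto F_A$ is left adjoint to $U_D$, and by Remark \ref{\labelauthor:rem:R-Kelly} the induced monad is the coproduct of the $\bkS_i$, with underlying functor $A \mapsto A + \coprod_{i \in I} S^\ast_i A$.

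The main obstacle I anticipate is the bookkeeping forced by separatedness: the complement $\bar{S}_i$ is only an endofunctor of $\A_m$, so it may be applied only to monomorphisms, whereas $h$ must be built into an arbitrary --- in particular non-monic --- multi-algebra and from an arbitrary $g$. Running the construction along the initial-algebra chain is precisely what keeps the two r\'egimes apart: $\bar{S}_i$ is ever only applied to the connecting monomorphisms $W^{(n)} \hookrightarrow W^{(n+1)}$ that build the chain inside $\A_m$, while the full functor $S_i$ of $\A$ (which acts on all morphisms and preserves monos) absorbs the non-monic maps $[g,(c^{(n)}_j)_{j\neq i}]$ into $B$. Verifying that the successor-step maps respect the connecting maps, and that the fixed-point equation survives passage to the colimit --- i.e. that $\bar{S}_i$ preserves the relevant chain colimit at convergence --- is where the stable-monomorphism hypotheses on $\A$ and the existence of $\mu H_A$ do the real work.
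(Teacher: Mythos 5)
Your reduction to free multi-algebras via Kelly's description (Remark \ref{\labelauthor:rem:R-Kelly}), the identification of the canonical $\bkS_i$-structures on $A+\coprod_{i\in I}S_i^\ast A$ as the free algebras on $Y_i$, and the translation of ``$[g,(c_i)]$ is a multi-algebra homomorphism extending $g$'' into the simultaneous system $c_i\cdot\xi_i=b_i\cdot S_i[g,(c_j)_{j\neq i}]\cdot\iota_i$ all agree with the paper's proof; your system is exactly the paper's square (\ref{\labelauthor:eq:eqn43}). The gap lies in how you solve that system. You construct and identify solutions by transfinite recursion along the initial-algebra chain $W^{(0)}=0$, $W^{(n+1)}=H_AW^{(n)}$, and you need this chain (i) to be well defined and (ii) to converge with colimit $\mu H_A$. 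Neither follows from the hypothesis, which is only that the initial algebra $\mu H_A$ \emph{exists}; and the standing assumptions of this section are cocompleteness plus monicity of coproducts of parallel monomorphisms --- \emph{not} stable monomorphisms, which your closing paragraph invokes. Without stability the limit-ordinal connecting maps of the chain need not be monic, so $H_A$ (assembled from the $\bar{S}_i$, which are defined only on $\A_m$) cannot even be applied to them; and even when the chain does live in $\A_m$, mere existence of $\mu H_A$ does not by itself yield convergence of the chain --- that equivalence is the content of Theorem \ref{\labelauthor:teo:t-takr} and carries its own hypotheses. As written, your argument therefore proves the theorem only under strictly stronger assumptions than stated.

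The repair is to use initiality of $\mu H_A$ directly, which is what the paper does: the maps $b_i=\beta_i\cdot S_i[g,\triangledown]\cdot\iota_i$ --- the same device you already use in your successor step to avoid applying $\bar{S}_i$ to a non-monic morphism --- make $\triangledown B=(B,B,\dots)$ into an $H_A$-algebra, and your system says precisely that $(c_i)_{i\in I}$ is an $H_A$-algebra homomorphism from $\mu H_A$ to $\triangledown B$ (with the homomorphism square read, as in your successor step, with $S_i(-)\cdot\iota_i$ in place of $\bar{S}_i(-)$). Initiality then delivers existence and uniqueness of $(c_i)$ in one stroke, with no chain, no convergence ordinal, and no appeal to stable monomorphisms; the rest of your proposal goes through unchanged.
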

	
	\begin{remark}
		The monad unit $\eta_A$ is the right-hand coproduct injection. The multiplication follows from $A + \coprod S^\ast_i A$ being the free multi-algebra on $A$.
	\end{remark}
	
	\begin{proof}
		Let $\bkS_i = (S_i, \mu^i, \eta^i)$ be the given monads. Following Remark \ref{\labelauthor:rem:R-Kelly} all we need proving is that the multi-algebra $\bar{A}= A + \coprod_{i \in I} S^\ast_i A$ is free.
		\begin{enumerate}[(1)]
			\item Let us describe its algebra structure explicitly for every $\mathds{S}_i$. The initial-algebra structure of $\mu H_A$ is given by isomorphisms
			\begin{equation*}
				\varphi_i: \bar{S}_i Y_i \to X_i
			\end{equation*}
			where $X_i = S_i^\ast A$ and $Y_i = A + \coprod_{ j \neq i} X_j$. This defines isomorphisms
			\begin{equation} \label{\labelauthor:eq:eqn41}
				\bar{\varphi}_i \equiv \bar{A} = Y_i + X_i \xrightarrow{Y_i + \varphi_i^{-1}} Y_i + \bar{S}_i Y_i = S_i Y_i
			\end{equation}
			And the algebra structure $\sigma_i$ of $\bar{A}$ for $\bkS_i$ is transported by this isomorphism from the free-algebra structure $\mu^i_{Y_i}$:
			\begin{equation*}
				\xymatrix{
					S_i \bar{A} \ar[r]^{\sigma_i} \ar[d]_{S_i \tilde{\varphi}_i}
					&
					\bar{A}
					\\
					S_iS_iY_i \ar[r]_{\mu^i_{Y_i}}
					&
					S_iY_i \ar[u]_{\bar{\varphi}_i^{-1}}
				}
			\end{equation*}
			\item For every multi-algebra
			\begin{equation*}
				\beta_i : S_i B \to B \qquad (i \in I)
			\end{equation*}
			and every morphism $f:A \to B$ we prove that a unique multi-algebra homomorphism
			\begin{equation*}
				\bar{f}: \bar{A} \to B \textnormal{ with } f = \bar{f} \cdot \inl
			\end{equation*}
			exists. The object $\triangledown B = (B,B,B \ldots)$ is an algebra for $H_A$ w.r.t. $(b_i)_{i \in I}:H_A(\triangledown B) \to \triangledown B$ given as follows:
			\begin{equation*}
				b_i \equiv 
				\xymatrix@1{
					\bar{S}_i(A+ \coprod_{I - \{i\}} B) \ar[r]^-{\bar{S}_i[f,\triangledown]} 
					&
					\bar{S}_iB~ \ar@{^{(}->}[r]
					&
					S_iB \ar[r]^{\beta_i}
					&
					B
				}
			\end{equation*}
			The middle subobject is the right-hand coproduct injection of $S_i B = B + \bar{S}_i B$. We have a unique homomorphism from the initial algebra $\mu H_A$:
			\begin{equation*}
				(h_i)_{i \in I}: (X_i)_{i \in I} \to \triangledown B
			\end{equation*}
			which means that the square
			\begin{equation} \label{\labelauthor:eq:eqn42}
				\vcenter{
					\xymatrix{
						\bar{S}_iY_i = \bar{S}_i (A + \coprod_{j \neq i} X_j) \ar[r]^-{\varphi_i} \ar[d]_{\bar{S}_i(A + \coprod_{j \neq i} h_j)}
						&
						X_i \ar[d]^{h_i}
						\\
						\bar{S}_i (A + \coprod_{j \neq i} B) \ar[r]_-{b_i}
						&
						B
					}
				}
			\end{equation}
			commutes for every $i$. Put
			\begin{equation*}
				\bar{h}_i = [h_j]_{j \neq i}: \coprod_{j \in I, j \neq i} X_j \to B
			\end{equation*}
			Then (\ref{\labelauthor:eq:eqn42}) is equivalent to the commutativity of the following square:
			\begin{equation} \label{\labelauthor:eq:eqn43}
				\vcenter{
					\xymatrix{
						\bar{S}_iY_i \ar[rr]^{\varphi_i} \ar[d]_{\bar{S}_i[f,\bar{h}_i]}
						&&
						X_i \ar[d]^{h_i}
						\\
						\bar{S}_i B \ar@{^{(}->}[r]
						&
						S_i B \ar[r]_{\beta_i}
						&
						B
					}
				}
			\end{equation}
			We are going to prove that the desired extension of $f$ is
			\begin{equation*}
				\bar{f} = [f ,\bar{h}]:A + \coprod_{j \in I} X_j \to B \textnormal{ where } \bar{h} = [h_j]_{j \in I}.
			\end{equation*}
			That is, we first need to prove that $\bar{f}$ is a homomorphism for $H_A$. Thus for every $i \in I$ we must prove that the following diagram commutes:
			\begin{equation}
				\vcenter{
					\xymatrix{
						S_i \bar{A} \ar[r]^{S_i \tilde{\varphi}_i} \ar[dd]_{S_i \bar{f}}
						&
						S_iS_iY_i \ar[r]^{\mu^i_Y} \ar[d]_{S_iS_i \bar{f}}
						&
						S_iY_i \ar[r]^-{\tilde{\varphi}^{-1}_i} \ar[d]^{S_i \bar{f}}
						&
						\bar{A} = A + \coprod_{j \neq i} X_j \ar[dd]^{\bar{f}}
						\\
						&
						S_iS_iB \ar[r]_{\mu^i_B} \ar[dl]_{S_i\beta_i}
						&
						S_i B \ar[rd]^{\beta_i}
						\\
						S_i B \ar[rrr]_{\beta_i}
						&&&
						B
					}
				}
			\end{equation}
			(The upper line is the algebra structure $\sigma_i$ of $\bar{A}$.) The middle square is the naturality of $\mu^i$, the lower-one is a monad-algebra axiom for $(B,\beta_i)$. We only need to prove that the right-hand square commutes: the left-hand one is its image under $S_i$. Using $S_iY_i = Y_i + \bar{S}_i Y_i$ we get the following presentation of the right-hand square, recalling (\ref{\labelauthor:eq:eqn41}):
			\begin{equation*}
				\xymatrix@+15pt{
					S_iY_i = Y_i + \bar{S}_iY_i \ar[r]^-{Y_i + \varphi_i} \ar[d]_{\bar{S}_i[f, [\bar{h}_i]] + [f, [\bar{h}_i]]}
					&
					A + \coprod_{j \neq i} X_j \ar[d]^{[f, [\bar{h}_i]]}
					\\
					B + \bar{S}_i B \ar[r]_{\beta_i}
					&
					B
				}
			\end{equation*}
			The left-hand component with domain $Y_i$ clearly commutes: recall that $\eta^i_B = \inl : B \to S_i B$, thus $\beta_i \cdot \inl = \id$ due to the monad axioms for $(B, \beta_i)$. The right-hand component forms the square (\ref{\labelauthor:eq:eqn43}).
			\item To prove uniqueness, let $\bar{f}: \bar{A} \to B$ be a multi-algebra homomorphism with $\bar{f} \cdot \inl = f$. Define $h_i: X_i \to B$ to be the $i$-th component of $\bar{f}$, thus, $\bar{f} = [f, [\bar{h}_i]]$. It is only needed to prove that the squares (\ref{\labelauthor:eq:eqn42}) commute: then $h_i$'s are determined uniquely, since $(X_i)$ is the initial algebra of $H_A$. Since $\bar{f}$ is a multi-algebra homomorphism, (\ref{\labelauthor:eq:eqn43}) commutes. This clearly implies that (\ref{\labelauthor:eq:eqn42}) does.
		\end{enumerate}
	\end{proof}
	
	\begin{theorem}[See \cite{\labelauthor:bib:ablm}] \label{\labelauthor:thm:T-411}
		For monads over $\Set$ the above sufficient condition for coproducts is essentially necessary: a coproduct of separated (= consistent) monads exists iff
		\begin{enumerate}[(a)]
			\item for every set $A$ the initial algebra of $H_A$ exists
		\end{enumerate}
		or
		\begin{enumerate}[(b)]
			\item all but one of the monads is trivial (i.\,e. isomorphic to $\mathds{M}_E$ or $\mathds{M}^0_E$).
		\end{enumerate}
	\end{theorem}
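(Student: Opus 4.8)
The plan is to prove the two implications separately and to reduce the ``only if'' direction to a convergence statement for the free-algebra chain of $H_A$. Throughout I use that over $\Set$ every consistent monad is separated (Examples~\ref{\labelauthor:ex:example-42}(3) together with Theorem~\ref{\labelauthor:thm:T-44}), so that the multi-algebra machinery of this section applies.

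For sufficiency there is nothing new to do in case~(a): it is precisely the hypothesis of the preceding Theorem, which then produces the coproduct $A\mapsto A+\coprod_{i\in I}S^\ast_iA$. In case~(b) I would exhibit the coproduct by hand. If every monad except one, say $\bkS_{i_0}$, is an exception monad $\bkM_{E_k}$, then the coproduct of these exception monads is again an exception monad $\bkM_E$ with $E=\coprod_k E_k$, and by Example~\ref{\labelauthor:ex:E-Set}(1) its coproduct with $\bkS_{i_0}$ is the monad $A\mapsto S_{i_0}(A+E)$; the $\bkM^0$-variants are handled by the evident modification of this formula. As these monads exist for an arbitrary $\bkS_{i_0}$, case~(b) always yields a coproduct.

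For necessity I would assume that the coproduct $\bkS$ exists and split on the number of nontrivial monads. If at most one of the $\bkS_i$ is nontrivial we are already in case~(b). So suppose at least two are nontrivial; the aim is to establish~(a). By Kelly's description (Remark~\ref{\labelauthor:rem:R-Kelly}) the existence of $\bkS$ is equivalent to the existence of free multi-algebras, so for each set $A$ the free multi-algebra $FA=SA$ is a genuine set and carries an $\bkS_i$-algebra structure $\alpha_i\colon S_iFA\to FA$ for every $i$. Using the separation $S_i=\Id+\bar S_i$ with the unit $\eta^i$ the left-hand injection, the subset of $FA$ generated from $A$ by the non-unit operations of all the $\bkS_i$ decomposes as $A+\coprod_{i\in I}X_i$, where $X_i$ is its $\bar S_i$-part. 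Because at least two monads are nontrivial one may alternate their genuine operations to build unboundedly many distinct elements, and the sole fact that $FA$ is a set forces this generated part to close up; concretely the $X_i$ satisfy $\bar S_i\bigl(A+\coprod_{j\neq i}X_j\bigr)\hookrightarrow X_i$. Thus $(X_i)_{i\in I}$ is a pre-fixpoint of $H_A$. Since $H_A$ preserves monomorphisms---it is assembled from the monos-preserving functors $\bar S_i$ and from coproducts, which are monic by our standing assumption---Theorem~\ref{\labelauthor:teo:t-takr} then guarantees that the free-algebra chain of $H_A$ (Construction~\ref{\labelauthor:con:C-W}) converges, so that $\mu H_A=(S^\ast_iA)_{i\in I}$ exists. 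This is~(a).

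The main obstacle is precisely this extraction of a pre-fixpoint of $H_A$ from the free multi-algebra $FA$. One must verify that the comparison of the generated part with $FA$ is monic even though each algebra structure $\alpha_i$ is, in general, far from injective, and that the generated part really closes up inside the set $FA$. This is where the hypotheses do their work: separatedness splits the generators off from the operation outputs, consistency keeps the units $\eta^i$ monic, the presence of two nontrivial monads rules out the degenerate collapse of case~(b), and the cowellpoweredness of $\Set$ (equivalently, a cardinality bound) is what forces the increasing chain of subobjects to stabilise. Everything else---the passage between $\mu H_A$ and the free multi-algebra, and the monad axioms---is routine once this pre-fixpoint is in hand.
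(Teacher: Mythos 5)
First, a point of comparison: the paper offers no proof of this theorem at all --- it is quoted from \cite{\labelauthor:bib:ablm} --- so there is no in-paper argument to measure yours against. Judged on its own terms, your proposal has a reasonable architecture: sufficiency of (a) is indeed just the preceding theorem, case (b) reduces to Example \ref{\labelauthor:ex:E-Set}(1), and for necessity the route via Kelly's criterion (Remark \ref{\labelauthor:rem:R-Kelly}) to free multi-algebras, then extraction of a pre-fixpoint of $H_A$, then Theorem \ref{\labelauthor:teo:t-takr}, is the right skeleton. But the decisive step is missing, and your own closing paragraph concedes as much.

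The gap is this: a pre-fixpoint in the sense of Definition \ref{\labelauthor:def:D-fix} is a \emph{subobject} condition, so you need genuine monomorphisms $\bar S_i\bigl(A+\coprod_{j\neq i}X_j\bigr)\rightarrowtail X_i$. The argument you offer --- let the part of $FA$ generated from $A$ by the non-unit operations close up, which it must since $FA$ is a set --- only yields the \emph{images} of $\bar S_iY_i$ under the (generally non-injective) algebra structures $\alpha_i$, i.e.\ surjections onto the $X_i$, and it likewise does not show that these images together with $A$ form a disjoint union $A+\coprod_{i\in I}X_i$ rather than overlapping. Both claims amount to a normal-form theorem for the free multi-algebra (every element is a generator or lies in exactly one layer $\bar S_i(\cdots)$, with distinct alternating words giving distinct elements), and establishing that --- using consistency, separatedness, and the presence of at least two nontrivial monads --- is essentially the entire content of the result in \cite{\labelauthor:bib:ablm}; naming the hypotheses that ``do the work'' is not the work. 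There is also a secondary gap: Theorem \ref{\labelauthor:teo:t-takr}(3) demands, for \emph{every} object $X$ of $\A_m^I$, some $Z$ with $H_AZ+X$ a subobject of $Z$, so a single pre-fixpoint extracted from $FA$ does not suffice; you would at least need to rerun the construction on free multi-algebras over $A$ enlarged by arbitrarily large sets.
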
		
	
	\begin{corollary}
		Let $\A$ have stable monomorphisms. Every collection of separated monads with arbitrarily large joint pre-fixpoints has a coproduct in Monad ($\A$).
	\end{corollary}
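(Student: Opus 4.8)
The plan is to reduce the statement, by the preceding theorem, to the single claim that for every object $A$ the endofunctor $H_A$ of $\A^I_m$ has an initial algebra $\mu H_A = (S^\ast_i A)_{i\in I}$; once that is secured, the coproduct is the monad $A \mapsto A + \coprod_{i\in I} S^\ast_i A$. So all the work lies in producing $\mu H_A$. First I would note that $H_A$ preserves monomorphisms: each $\bar S_i$ is an endofunctor of $\A_m$ and hence sends monos to monos, and $H_A$ carries a family of monos $(m_j)_j$ to the family $\bigl(\bar S_i(\id_A + \coprod_{j\neq i} m_j)\bigr)_{i\in I}$, which is again a family of monos because coproducts of parallel monomorphisms are monic (the standing assumption of this section, sharpened by stable monomorphisms).

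Next I would build the initial-algebra chain of $H_A$, i.e.\ the free-algebra chain of Construction \ref{\labelauthor:con:C-W} on the initial object $(0)_{i\in I}$. The essential observation is that, although $\A^I_m$ is \emph{not} cocomplete (its monoidal $+$ carries no universal property), this chain uses only the operation $+$ together with colimits of chains of monomorphisms, and both are available: the map out of the initial object is monic because $\A$ has stable monomorphisms, the isolated steps stay monic because $H_A$ preserves monos, and the limit steps are colimits of chains of monos, which exist in $\A_m$ and again consist of monos. Thus the entire chain lives in $\A_m$ and has all the colimits its construction requires.

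Then I would exhibit a pre-fixpoint of $H_A$. For the given $A$ fix an infinite cardinal $\alpha \geq \card I$ and, using arbitrarily large joint pre-fixpoints (Definition \ref{\labelauthor:def:D-34}), an object $Z$ with $S_i Z \rightarrowtail Z$ for all $i$ and $Z + A \simeq Z \simeq \coprod_\alpha Z$. The constant family $(Z)_{i\in I}$ is then a pre-fixpoint: since $\card(I\setminus\{i\}) \leq \alpha$ we have $Y_i = A + \coprod_{j\neq i} Z \rightarrowtail A + \coprod_\alpha Z \simeq Z$, whence $\bar S_i Y_i \rightarrowtail S_i Y_i \rightarrowtail S_i Z \rightarrowtail Z$, so that the $i$-th component of $H_A(Z)_{i\in I}$ is a subobject of $Z$. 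This is precisely condition (3) of Theorem \ref{\labelauthor:teo:t-takr} for the initial object.

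Finally I would appeal to the convergence argument underlying Theorem \ref{\labelauthor:teo:t-takr}: the presence of a pre-fixpoint forces the initial-algebra chain to converge, and its value is $\mu H_A$. The main obstacle is exactly this last step. One cannot quote Theorem \ref{\labelauthor:teo:t-takr} or Corollary \ref{\labelauthor:cor:C-free} verbatim, since they are stated for cocomplete categories whereas $\A^I_m$ is not one; so the delicate part is to verify that the chain construction and the stabilization proof of \cite{\labelauthor:bib:takr} go through using only the two ingredients $\A_m$ actually supplies --- the coproduct operation $+$ and colimits of chains of monos --- and, in particular, that the comparison maps $W_i \rightarrowtail Z$ into the pre-fixpoint assemble into an increasing chain of subobjects of $Z$ which must stabilize, yielding convergence and hence $\mu H_A$.
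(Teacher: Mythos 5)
Your reduction and overall strategy coincide with the paper's: both proofs boil the corollary down to producing the initial algebra $\mu H_A$ of the endofunctor $H_A$ of $\A^I_m$, and both obtain it from the pre-fixpoint/free-algebra machinery of Theorem \ref{\labelauthor:teo:t-takr} and Corollary \ref{\labelauthor:cor:C-free}. The one genuine difference is how the pre-fixpoint of $H_A$ is manufactured. You take a joint pre-fixpoint $Z$ of the $S_i$ with $Z \simeq Z+A \simeq \coprod_\alpha Z$ for an infinite $\alpha \geq \card I$ and check directly that the constant family $(Z)_{i\in I}$ is a pre-fixpoint of $H_A$; this is shorter and correct, but it only verifies condition (3) of Theorem \ref{\labelauthor:teo:t-takr} at the initial object, so you must invoke the ``local'' form of the convergence result (the free algebra on $0$ exists as soon as one pre-fixpoint does), which the paper states only globally. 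The paper instead verifies the full hypothesis of Corollary \ref{\labelauthor:cor:C-free}: for an arbitrary $X=(X_i)$ in $\A^I_m$ it forms the free algebra $Y^\ast = F_SY$ of $S = \coprod_I\coprod_{i\in I}S_i$ on $Y = A + \coprod_{\bkN}\coprod_{i\in I}X_i$, uses $Y^\ast \simeq SY^\ast + Y$ to absorb $X$, and takes the constant family on $Y^\ast$ as a pre-fixpoint $Z$ with $Z \simeq Z+X$. Your route buys a more elementary construction (no detour through the free $S$-algebra); the paper's buys literal applicability of the stated corollary. Your closing caveat --- that $\A^I_m$ is not cocomplete, so the chain construction and stabilization argument of \cite{\labelauthor:bib:takr} must be rechecked using only the monoidal $+$ and colimits of chains of monomorphisms --- is a real point, but it affects the paper's own proof equally, since Corollary \ref{\labelauthor:cor:C-free} is there applied to $H_A$ on $\A^I_m$ without comment; so it is not a gap specific to your argument.
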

		
	Indeed, assuming $\bkS_i, i \in I$, have arbitrarily large joint pre-fixpoints, we prove that the endofunctor $H_A$ has an initial algebra. By Corollary \ref{\labelauthor:cor:C-free} we only need to find, for every object $X = (X_i)$ of $\A^I_m$, a prefixed point $Z$ of $H_A$ with $ Z \simeq Z + X$.
		
	The functor $S = \coprod_I \coprod_{i \in I} S_i$ has arbitrarily large pre-fixpoints: given an object $Y$ of $\A$, let $V$ be a joint pre-fixpoint of all $S_i$ with $Y+V \simeq V \simeq \coprod_{I+I} V$, then $V$ is a pre-fixpoint of $S$ due to
	\begin{equation*}
		SV = \coprod_I \coprod_{i \in I} S_i V \rightarrowtail \coprod_I \coprod_I V \simeq V.
	\end{equation*}
	By Corollary \ref{\labelauthor:cor:C-free}, $S$ has a free algebra on
	\begin{equation*}
		Y = A + \coprod_{\mathds{N}} \coprod_{i \in I} X_i
	\end{equation*}
	(for the above object $X$ of $\A^I$). Put $Y^\ast = F_SY$. Remark \ref{\labelauthor:rem:r-plus} yields
	\begin{equation*}
		Y^\ast = SY^\ast + Y = SY^\ast + A + \coprod_{\mathds{N}} \coprod_{i \in I} X_i.
	\end{equation*}
	The desired object $Z$ of $\A^I_m$ is $Z = (Y,Y,Y, \ldots)$. Obviously $Y_i \cong Y_i + X_i$, thus, $Z \simeq Z + X$. And $H_AZ = (A + \coprod_{j \neq i} \bar{S}_j Y^\ast)_{i \in I}$ is a subobject of $Z$ due to the following monomorphism:
	\begin{equation*}
		A + \coprod_{j \neq i} \bar{S}_j Y^\ast \rightarrowtail Y + \coprod_I SY^\ast \simeq Y + SY^\ast \simeq Y^\ast.
	\end{equation*}
	
	\begin{corollary}
		Let $\A$ have stable monomorphisms. A coproduct of accessible separated monads $\bkS$ and $\bkT$ is given by
		\begin{equation*}
			 A \mapsto A + \colim X_k + \colim Y_k
		\end{equation*}
		for the transfinite chains
		\begin{equation*}
			X_k: 0 \to \bar{S}A \to \bar{S}(A + \bar{T}A) \to \ldots
		\end{equation*}
		and
		\begin{equation*}
			Y_k: 0 \to \bar{T}A \to \bar{T}(A + \bar{S}A) \to \ldots
		\end{equation*}
		More precisely, there chains are defined by the mutual recursion
		\begin{equation} \label{\labelauthor:eqn:Rec2}
			X_{k+1} = \bar{S}(A+Y)_k \textnormal{ and } Y_{k+1} = \bar{T}(A+X_k)
		\end{equation}
		on isolated steps, and by colimits on limit steps.
	\end{corollary}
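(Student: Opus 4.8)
The plan is to read off the coproduct from the two preceding results in the case of a two-element index set $I=\{1,2\}$, with $\bkS_1=\bkS$ and $\bkS_2=\bkT$, and then to recognize the mutual recursion (\ref{\labelauthor:eqn:Rec2}) as nothing but the initial-algebra chain of the relevant endofunctor $H_A$. First I would invoke the Proposition that accessible endofunctors have arbitrarily large joint pre-fixpoints: since $\bkS$ and $\bkT$ are accessible (and separated), the preceding Corollary guarantees that their coproduct exists, and by the Theorem producing the formula $A\mapsto A+\coprod_{i\in I}S^\ast_iA$ the coproduct monad sends $A$ to $A+S^\ast_1A+S^\ast_2A$, where $(S^\ast_1A,S^\ast_2A)=\mu H_A$ is the initial algebra of $H_A$ on $\A^I_m=\A_m\times\A_m$. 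So everything reduces to computing this initial algebra.

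Next I would unfold $H_A$ in the two-object case. By definition $H_A(X,Y)=(\bar{S}_1Y_1,\bar{S}_2Y_2)$ with $Y_1=A+X_2=A+Y$ and $Y_2=A+X_1=A+X$, that is,
\begin{equation*}
	H_A(X,Y)=\bigl(\bar{S}(A+Y),\ \bar{T}(A+X)\bigr).
\end{equation*}
The initial object of $\A_m\times\A_m$ is $(0,0)$, the map out of $0$ being monic by condition (b) of Definition \ref{\labelauthor:def:D-stable}. Applying Construction \ref{\labelauthor:con:C-W} to $H_A$ with this initial object as the starting point, and using that colimits in $\A_m\times\A_m$ are computed componentwise, the $k$-th stage of the initial-algebra chain is exactly a pair $(X_k,Y_k)$ obeying $X_0=Y_0=0$, the isolated steps of (\ref{\labelauthor:eqn:Rec2}), and componentwise colimits at limit ordinals. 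A direct check of the first stages gives $X_1=\bar{S}A$, $Y_1=\bar{T}A$ and $X_2=\bar{S}(A+\bar{T}A)$, matching the displayed chains. An easy transfinite induction as in Remark \ref{\labelauthor:rem:r-mono}, using preservation of monomorphisms by $\bar{S},\bar{T}$ and stability of monos, shows all connecting maps are monic, so the chain genuinely lives in $\A_m\times\A_m$ and its colimit is computed as in $\A$.

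It then remains to argue convergence: if the chain stabilizes after some ordinal $\lambda$, then $H_A$ preserves that colimit and $\mu H_A=(\colim_k X_k,\ \colim_k Y_k)$, whence substituting into Step 1 yields $A\mapsto A+\colim_k X_k+\colim_k Y_k$, as claimed. I expect the main obstacle to be precisely this convergence step, i.e.\ transferring accessibility from the monads $\bkS=\Id+\bar{S}$ and $\bkT=\Id+\bar{T}$ to the complements $\bar{S},\bar{T}$ and hence to $H_A$ on $\A_m$. Here I would exploit stable monomorphisms: $\lambda$-filtered colimits of chains of monos in $\A_m$ agree with those in $\A$, and since $S$ and $T$ preserve them (and $\Id$ and the coproduct functors $A+(-)$ do as well), the complements preserve $\lambda$-filtered colimits of monomorphisms; consequently $H_A$ preserves the colimit of its initial-algebra $\lambda$-chain, which forces convergence at $\lambda$. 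With that in hand the identification $\mu H_A=(\colim_kX_k,\colim_kY_k)$ is immediate, completing the proof.
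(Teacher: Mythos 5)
Your proposal is correct and follows essentially the same route as the paper: reduce to computing $\mu H_A$ for $I=\{1,2\}$, transfer $\lambda$-accessibility from $\bkS$ and $\bkT$ to the complements $\bar{S}$ and $\bar{T}$ (the one point where the paper supplies an explicit cocone-factorization argument that you only gesture at), and identify $\mu H_A$ with the colimit of the free-algebra $\lambda$-chain of Construction~\ref{\labelauthor:con:C-W}, whose isolated steps are exactly the mutual recursion~(\ref{\labelauthor:eqn:Rec2}). There is no substantive difference in approach.
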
	
	To see this, let $\lambda$ be an infinite cardinal such that $S$ and $T$ preserve $\lambda$-filtered colimits. Then $\bar{S}$ and $\bar{T}$ also preserve $\lambda$-filtered colimits. (Indeed, given a $\lambda$-filtered colimit $b_j: B_j \to B, j \in J$, we know that $Sb_j = b_j + \bar{S}b_j$ is also a colimit cocone. For every cocone $c_j : \bar{S}B_j \to C$ consider the cocone $b_j + c_j: SB_j \to B+C$. Since this factorizes uniquely through $Sb_j$, it follows that $c_j$ factorizes uniquely though $\bar{S}b_j$. Thus $\bar{S}$ preserves $\lambda$-filtered colimits, analogously $\bar{T}$). Consequently, the functor $H_A(V,W) = (\bar{S} (W+A), \bar{T}(V+A))$ preserves $\lambda$-filtered colimits. This implies, as proved in $[A]$, that $\mu H_A$ is the colimit of the $\lambda$-chain $(X_i, Y_i)$ which is the free-algebra chain $H_A$ and the initial object $X$ of $\A^I_m$, see Construction \ref{\labelauthor:con:C-W}. The recursion $W_{k+1} = H_A W_k$ is precisely (\ref{\labelauthor:eqn:Rec2}) above.
	
	\begin{remark}
		More generally, a coproduct of accessible separated monads $\bkS_i\, (i \in I)$ is given by
		\begin{equation*}
			A \mapsto A + \coprod_{i \in I} X^i_k
		\end{equation*}
		for the transfinite $(X^i_k)_{k \in \Ord}$ chains given on isolated steps by
		\begin{equation*}
			X^i_{k+1} = \bar{S}_i (A + \coprod_{j \neq i} X^j_k)
		\end{equation*}
		and on limit steps by colimits.
	\end{remark}
	
	\begin{notation}
		For a separated monad $\bkS$ define endofunctors $\bar{S}_A$ of $\A_m$ by
		\begin{equation*}
			\bar{S}_A X = \bar{S} (A+X).
		\end{equation*}
		Thus, the above formula simplifies to $X^i_{k+1} = ( \bar{S}_i )_A \coprod_{j \neq i} X^j_k$. For two monads we also have a more compact formula:
	\end{notation}
	
	\begin{corollary}
		Let $\A$ have stable monomorphisms.
		The coproduct of a pair $\bkS, \bkT$ of separated monads with arbitrarily large joint pre-fixpoints is given by
		\begin{equation*}
			A \mapsto A + \mu \bar{S}_A \bar{T}_A + \mu \bar{T}_A \bar{S}_A.
		\end{equation*}
	\end{corollary}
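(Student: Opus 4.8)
The plan is to invoke the preceding corollary, which already presents the coproduct as $A \mapsto A + S^\ast A + T^\ast A$ with $(S^\ast A, T^\ast A) = \mu H_A$, and then to identify the two components of this initial algebra with the initial algebras of the composite functors $\bar{S}_A \bar{T}_A$ and $\bar{T}_A \bar{S}_A$ on $\A_m$.

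For the two-element index set the endofunctor is $H_A(X,Y) = (\bar{S}_A Y, \bar{T}_A X)$, so $H_A^2(X,Y) = (\bar{S}_A \bar{T}_A X, \bar{T}_A \bar{S}_A Y)$; that is, $H_A^2$ is the componentwise product of $\bar{S}_A \bar{T}_A$ and $\bar{T}_A \bar{S}_A$, whose initial algebra is the pair of their initial algebras. Since $\mu H_A$ exists by the preceding corollary, it will suffice to prove $S^\ast A \cong \mu(\bar{S}_A \bar{T}_A)$; the dual isomorphism $T^\ast A \cong \mu(\bar{T}_A \bar{S}_A)$ then follows by exchanging $\bkS$ and $\bkT$.

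First I would record the initial-algebra structure of $\mu H_A$ as a pair of isomorphisms (Lambek's Lemma) $\alpha_1 : \bar{S}_A T^\ast A \to S^\ast A$ and $\alpha_2 : \bar{T}_A S^\ast A \to T^\ast A$, and equip $S^\ast A$ with the composite $\tilde{\alpha} = \alpha_1 \cdot \bar{S}_A \alpha_2 : \bar{S}_A \bar{T}_A S^\ast A \to S^\ast A$. Given any $(\bar{S}_A \bar{T}_A)$-algebra $\gamma : \bar{S}_A \bar{T}_A C \to C$, I would form the $H_A$-algebra $(C, \bar{T}_A C)$ with structure $(\gamma, \id_{\bar{T}_A C})$. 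The \emph{key step} is the bijection between $H_A$-homomorphisms $(h,k) : (S^\ast A, T^\ast A) \to (C, \bar{T}_A C)$ and $(\bar{S}_A \bar{T}_A)$-homomorphisms $h : (S^\ast A, \tilde{\alpha}) \to (C, \gamma)$: componentwise the two homomorphism equations read $h \cdot \alpha_1 = \gamma \cdot \bar{S}_A k$ and $k \cdot \alpha_2 = \bar{T}_A h$, the latter forces $k = \bar{T}_A(h) \cdot \alpha_2^{-1}$, and substituting this into the former rewrites it as $h \cdot \tilde{\alpha} = \gamma \cdot \bar{S}_A \bar{T}_A h$, which is exactly the condition that $h$ be a $(\bar{S}_A \bar{T}_A)$-homomorphism.

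By initiality of $\mu H_A$ there is a unique pair $(h,k)$, hence a unique $h$, so $(S^\ast A, \tilde{\alpha})$ is the initial $(\bar{S}_A \bar{T}_A)$-algebra and $S^\ast A \cong \mu(\bar{S}_A \bar{T}_A)$; symmetrically $T^\ast A \cong \mu(\bar{T}_A \bar{S}_A)$, which yields the claimed formula. I expect the only genuine obstacle to be presenting this identification cleanly: the naive attempt to match the transfinite mutual-recursion chain of the previous corollary against the two initial-algebra chains of the composites forces delicate ordinal bookkeeping at limit steps, whereas the universal-property argument above avoids it entirely. The remaining points — that $(C, \bar{T}_A C)$ is genuinely an $H_A$-algebra and that the homomorphism conditions split componentwise as stated — are routine diagram chases.
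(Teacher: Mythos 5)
Your argument is correct and essentially coincides with the paper's own proof: the paper isolates your key step as a standalone lemma (for $H(V,W)=(FW,GV)$ an initial algebra has components $\mu FG$ and $\mu GF$), established by exactly your device of turning an algebra for one composite into an $H$-algebra with one identity structure component and reading off the resulting bijection of homomorphisms via initiality. The motivational aside about $H_A^2$ is unused in your argument and can be dropped; everything else, including the reduction to the preceding corollary and the symmetry step, matches the paper.
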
	
	Indeed, the coproduct is given by $A + S^\ast A + T^\ast A$, so all we need proving is that the endofunctor $H_A$ has the initial algebra carried by $(\mu \bar{S}_A \bar{T}_A, \mu \bar{T}_A \bar{S}_A)$. We prove a more general statement:

	\begin{lemma}
		Given endofunctors $F$ and $G$ of $\A$ define an endofunctor $H$ of $\A^2$ by $H(V,W) = (FW, GV)$. If $(X,Y)$ is an initial algebra of $H$, then $X = \mu FG$ and $Y = \mu GF$.
	\end{lemma}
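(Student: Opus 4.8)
The plan is to recognize $H$-algebras as pairs of intertwined $F$- and $G$-structures and then show that each coordinate of an initial $H$-algebra carries the relevant initial algebra structure. First I would unravel the definitions: an $H$-algebra is a quadruple $(V,W,f,g)$ with $f\colon FW\to V$ and $g\colon GV\to W$, and an $H$-homomorphism $(u,v)\colon(V,W,f,g)\to(V',W',f',g')$ is a pair $u\colon V\to V'$, $v\colon W\to W'$ satisfying $u\cdot f = f'\cdot Fv$ and $v\cdot g = g'\cdot Gu$. Writing the given initial algebra as $(X,Y,\xi,\gamma)$ with $\xi\colon FY\to X$ and $\gamma\colon GX\to Y$, Lambek's Lemma makes the structure $(\xi,\gamma)$ invertible in $\A^2$, so \emph{both} $\xi$ and $\gamma$ are isomorphisms; the invertibility of $\gamma$ is the fact the whole argument turns on.

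The candidate for $\mu FG$ is $X$ equipped with $\alpha = \xi\cdot F\gamma\colon FGX\to X$. Two functors organize the comparison. I would define $\Phi\colon\Alg FG\to\Alg H$ by $\Phi(A,a)=(A,GA,a,\id_{GA})$ (sending a homomorphism $h$ to $(h,Gh)$), and $\Psi\colon\Alg H\to\Alg FG$ by $\Psi(V,W,f,g)=(V,f\cdot Fg)$ (sending $(u,v)$ to $u$). A short chase using the two $H$-homomorphism equations shows $\Psi$ is well defined on morphisms, and one checks directly that $\Psi\Phi=\Id$; note $\Psi(X,Y,\xi,\gamma)=(X,\alpha)$.

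To prove $(X,\alpha)$ initial among $FG$-algebras, fix any $(A,a)$. For \emph{existence} of a homomorphism, apply initiality of $(X,Y,\xi,\gamma)$ to $\Phi(A,a)$ to obtain a unique $H$-homomorphism $(u,v)$, and read off that its first component $u=\Psi(u,v)\colon(X,\alpha)\to(A,a)$ is an $FG$-homomorphism. For \emph{uniqueness}, take any $FG$-homomorphism $u'\colon(X,\alpha)\to(A,a)$ and lift it to an $H$-homomorphism by setting $v'=Gu'\cdot\gamma^{-1}$; here I use that $\gamma$ is invertible. A routine check---substituting the identity $u'\cdot\xi\cdot F\gamma = a\cdot FGu'$---verifies the two $H$-homomorphism equations, so $(u',v')$ is a map into $\Phi(A,a)$; by initiality $(u',v')=(u,v)$, and in particular $u'=u$. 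Thus $X=\mu FG$, and the symmetric argument (interchanging $F,G$ and the two coordinates, with $\beta=\gamma\cdot G\xi\colon GFY\to Y$) gives $Y=\mu GF$.

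I expect the main obstacle to be precisely the uniqueness half: existence is immediate once $\Phi$ and $\Psi$ are in place, but uniqueness is \emph{not} formal, because the hom-set bijection that would exhibit $\Psi$ as a left adjoint of $\Phi$ fails for general $H$-algebras---recovering the second component $v$ from $u$ requires $g$ to be invertible. The step that rescues uniqueness is exactly the use of Lambek's Lemma to invert $\gamma$ on the \emph{initial} algebra, which is what lets one canonically reconstruct the missing $v'$ from a putative second homomorphism $u'$.
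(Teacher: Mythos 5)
Your argument is correct and is essentially the paper's own proof, modulo the symmetry swapping $F$ and $G$: your auxiliary algebra $\Phi(A,a)=(A,GA,a,\id_{GA})$ is exactly the paper's $H$-algebra $(FB,B)$ with structure $(\id_{FB},\beta)$ after interchanging the two coordinates, and both proofs obtain existence from initiality of $(X,Y)$ and uniqueness by lifting a putative homomorphism via Lambek's Lemma (you invert $\gamma$, the paper inverts $x$). The extra functor $\Psi$ and the observation $\Psi\Phi=\Id$ are a harmless repackaging, not a different route.
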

	
	\begin{proof}
		Let the algebra structure of $\mu H = (X,Y)$ be given by
		\begin{equation*}
			\xymatrix@1{x:FY \ar[r]^-{\sim} & X} \textnormal{ and } \xymatrix@1{y:GY \ar[r]^-{\sim} & X}.
		\end{equation*}
		Then we prove that $GF$ has the initial algebra
		\begin{equation*}
			\xymatrix@1{GFY \ar[r]^{Gx} & GX \ar[r]^{y} & X,}
		\end{equation*}
		by symmetry $\mu FG = X$.
		
		For every algebra $\beta:GFB \to B$ of $GF$ form the algebra for $H$  on $(FB,B)$ with the following structure
		\begin{equation*}
			\id: FB \to FB \textnormal{ and } \beta: GFB \to B.
		\end{equation*}
		Given the unique homomorphism of $H$-algebras
		\begin{equation*}
			(a,b): (X,Y) \to (FB,B)
		\end{equation*}
		it is easy to verify that $b: (X,y . Gx) \to (B,\beta)$ is a homomorphism for $GF$. Conversely, if $b: (X, y . Gx) \to (B, \beta)$ is a homomorphism for $GF$, then put $a = Fb . x^{-1}: X \to FB$. Then $(a,b): (X,Y) \to (FB,B)$ is a homomorphism for $H$. Thus, $b$ is the unique homomorphism for $GF$, proving $\mu GH = X$. 
	\end{proof}



\begin{thebibliography}{99}


\bibitem{\labelauthor:bib:a} J. Ad\'{a}mek, Free algebras and automata realisations in the language of categories, \emph{Comment. Math. Univ. Carolin\ae} 14(1974), 589--602.
\bibitem{\labelauthor:bib:ablm} J. Ad\'{a}mek, N. Bowler, P.B. Levy and S. Milius, Coproducts of monads in $\Set$, Proc. 27$^{\textnormal{th}}$ Annual Symposium on Logic in Computer Science (LICS 2012), 45--54.
\bibitem{\labelauthor:bib:AAMV} A. Aczel, J. Ad\'{a}mek, S. Milius and J. Velebil, Infinite trees and completely iterative theories, Theor. Comput. Sci. 300(2003), 1--45.
\bibitem{\labelauthor:bib:ahs} J. Ad\'{a}mek, H. Herrlich and G. E. Stecker, \emph{Abstract and concrete categories}, 2nd edition, Dover Publ. 2009.
\bibitem{\labelauthor:bib:AR} J. Ad\'{a}mek and J. Rosick\'{y}, \emph{Locally presentable and accessible categories}, Cambridge Univ. Press 1994.
\bibitem{\labelauthor:bib:b} M. Barr, Coequalizers and free triples, \emph{Math. Z.} 116(1970), 307--322.
\bibitem{\labelauthor:bib:e} C. C. Elgot, Monadic computation and iterative algebraic theories, Logic Colloquium 73, North Holland Publ. 1975.
\bibitem{\labelauthor:bib:gu} N. Ghani  and T. Uustalu, Coproducts of ideal monads, \emph{Theor. Inform. and Appl.} 38(2004), 321--342.	
\bibitem{\labelauthor:bib:Ke} G. M. Kelly, A unified treatment of transfinite constructions for free algebras, free monoids, colimits, associated sheaves, and so on, \emph{Bull. Austral. Math. Soc.} 22(1980), 1--84.	
\bibitem{\labelauthor:bib:l} J. Lambek, A fixpoint theorem for complete categories, \emph{Math. Z.} 103(1968), 151--161.
\bibitem{\labelauthor:bib:Mi} S. Milius, Completely iterative algebras and completely iterative monads, \emph{Inform. and Comput.} 196 (2005), 1--41.
\bibitem{\labelauthor:bib:ML} S. MacLane, Categories for the working mathematician, Springer Verlag 1997.
\bibitem{\labelauthor:bib:takr} V. Trnkov\'{a}, J. Ad\'{a}mek, V. Koubek and J. Reiterman, Free algebras, input processes and free monads, \emph{Comment. Math. Univ. Carolin\ae} 16(1979), 339--351.

\end{thebibliography}
\end{document}